\newcommand{\M}{\mathcal{M}}
\newcommand{\N}{\mathcal{N}}
\newcommand{\SC}{\mathcal{C}}
\newcommand{\SD}{\mathcal{D}}
\newcommand{\F}{\mathcal{F}}
\newcommand{\U}{\mathcal{U}}
\newcommand{\lr}[1]{\langle #1 \rangle}
\newcommand{\LL}{\mathcal{L}}
\newcommand{\K}{\mathsf{K}}
\newcommand{\DK}{\widehat{\mathsf{K}}}
\newcommand{\bis}{\mathrel{\mathchoice%
{\raisebox{.3ex}{$\,
  \underline{\makebox[.7em]{$\leftrightarrow$}}\,$}}%
{\raisebox{.3ex}{$\,
  \underline{\makebox[.7em]{$\leftrightarrow$}}\,$}}%
{\raisebox{.2ex}{$\,
  \underline{\makebox[.5em]{\scriptsize$\leftrightarrow$}}\,$}}%
{\raisebox{.2ex}{$\,
  \underline{\makebox[.5em]{\scriptsize$\leftrightarrow$}}\,$}}}}
\newcommand{\shorteq}{
  \settowidth{\@tempdima}{-}
  \resizebox{\@tempdima}{\height}{=}
}
\newcommand{\assn}{\!:\!\!\shorteq\,}
\newcommand{\sassn}{{[:\shorteq]}}
\newtheorem{theorem}{Theorem}[section]
\newtheorem{lemma}[theorem]{Lemma}
\newtheorem{definition}[theorem]{Definition}
\newtheorem{proposition}[theorem]{Proposition}
\newtheorem{corollary}[theorem]{Corollary}
\newenvironment{proof} {\textsc{Proof.}\quad} {\hfill $\Box$\\}
\newenvironment{proofcomp} {\textsc{Proof of Theorem \ref{thm.comp}.}\quad} {\hfill $\Box$\\}
\newenvironment{proofanf} {\textsc{Proof of Proposition \ref{prop.anf}.}\quad} {\hfill $\Box$}
\title{Impure Simplicial Complex and Term-Modal Logic with Assignment Operators}
\author{Yuanzhe Yang
\institute{Department of Philosophy and Religious Studies\\
Peking University\\
Beijing, China}
\email{1900014924@pku.edu.cn}
}
\begin{document}
\maketitle

\begin{abstract}
Impure simplicial complexes are a powerful tool to model multi-agent epistemic situations where agents may die, but it is difficult to define a satisfactory semantics for the ordinary propositional modal language on such models, since many conceptually dubious expressions involving dead agents can be expressed in this language. In this paper, we introduce a term-modal language with assignment operators, in which such conceptually dubious expressions are syntactically excluded. We define both simplicial semantics and first-order Kripke semantics for this language, characterize their respective expressivity through notions of bisimulation, and show that the two semantics are equivalent when we consider a special class of first-order Kripke models called local epistemic models. We also offer a complete axiomatization for the epistemic logic based on this language, and show that our language has a notion of assignment normal form. Finally, we discuss the behavior of a kind of intensional distributed knowledge that can be naturally expressed in our language.
\end{abstract}

\section{Introduction}\label{sec.intro}
\emph{Simplicial Complexes} have been a useful technical tool to study distributed computing since the 1990s,\footnote{
Earliest works on distributed computing using simplicial complexes include \cite{impdist,loui1987memory,BIRAN1990420}, etc.
One may refer to \cite{herlihy2013distributed} for more details on this topic.
} 
but it was only recently that logicians also started using them to study multi-agent epistemic logic \cite{VANDITMARSCH2021100662,GOUBAULT2021104597}. 
Roughly, a simplicial complex is a set of vertices colored by names of agents,
equipped with a family of downward-closed subsets of vertices called \emph{faces} or \emph{simplexes}.
Intuitively, the colored vertices represent the local epistemic states of the agents,
while the maximal faces, called \emph{facets}, represent global epistemic states,
which are epistemically possible for the agents they contain.

A simplicial complex is called \emph{impure} if the vertices in different facets can be colored by different sets of agents.
Intuitively, this means that an agent can be \emph{alive} in some global states, while \emph{dead} in the others.
Though this is a rather natural idea,
it turns out to be quite tricky to define a reasonable semantics for the ordinary propositional modal language (with agent-labeled propositional letters and modal operators)
on impure simplicial complexes.
This is mainly because it is sometimes difficult to decide the truth value of formulas involving possibly dead agents.
For example, consider the sentence $\K_a \K_b p_c$,
which intuitively says that $a$ knows that $b$ knows that $c$ has property $p$.
Should it be true at facet $F$ in the following simplicial complex $\SC$?
\\[-1em]
\begin{center}
\begin{tikzpicture}
\tikzstyle{vertex}=[shape=circle,draw,minimum size=4mm,inner sep=0pt,fill=white]

\node[vertex] (c1) at (0, 0) {$c$};
\node[vertex] (b) at (0, 1) {$b$};
\node[vertex] (a) at (0.87, 0.5) {$a$};
\node[vertex] (c2) at (1.74, 0) {$c$};
\node[vertex] (d) at (1.74, 1) {$d$};
\node[] (F) at (0.3 ,0.5) {$F$};
\node[] (G) at (1.44, 0.5) {$G$};
\node[shape=circle,draw,minimum size=6mm,inner sep=0pt] (pext) at (0, 0) {};
\node[] (p) at (-0.5, -0.1) {$p$};

\begin{scope}[on background layer]
\path [fill=lightgray,draw] (a.center) to (b.center) to (c1.center) to (a.center);
\path [fill=lightgray,draw] (a.center) to (d.center) to (c2.center) to (a.center);
\end{scope}
\end{tikzpicture}
\end{center}
~\\[-1em]
Our intuition here is not very clear:
since $a$ can imagine the global state $G$ where $b$ is dead,
we need to decide the truth value of $\K_b p_c$ at $G$;
but conceptually, it is hard to tell what a dead agent should know.

Hitherto, there are two kinds of impure simplicial semantics:
one is the two-valued semantics proposed in \cite{goubault2022simplicialmodelkb4nepistemic},
in which dead agents also have concrete properties --- in particular,
a dead agent always ``knows'' everything, including the contradiction;
the other is the three-valued semantics proposed in \cite{10.1007/978-3-030-88853-4_3,10.1093/logcom/exae055,SCBIS2024}, 
in which formulas involving dead agents might be neither true nor false, but \emph{undefined}.\footnote{
The relation between this two kinds of semantic is studied in \cite{SCTV2023}.
}It is worth noting that according to both semantics,
in the above example,
$\K_a \K_b p_c$ is \emph{true} at $F$.

None of the two semantics seems completely satisfactory:
in the two-valued semantics, we are forced to assign true or false to arbitrary expressions involving possibly dead agents,
which might be conceptually dubious,
\footnote{In \cite{goubault2022simplicialmodelkb4nepistemic}, the authors indeed offer an conceptual explanation for assigning non-trivial truth values of propositional letters to dead agents:
as the authors explain, propositional letters might describe what the dead agents did \emph{before they died}. 
However, it seems that such an explanation only works when we interpret \emph{all} propositional letters in this rather specific way.
}
and it sounds absurd to suggest that a dead agent \emph{knows} the contradiction;
on the other hand, the three-valued semantics relies on a rather complicated notion of definability, 
making it a bit difficult to study its behavior on the technical level,
and the corresponding axiomatization is also a non-classical non-normal modal logic \cite{10.1007/978-3-030-88853-4_3,10.1093/logcom/exae055}.
It is also interesting to note that neither semantics validates the standard $\mathtt{T}$-axiom for knowledge:
in the two-valued semantics, $\K_a \phi \to \phi$ can simply be \emph{false} when $a$ is dead,
while in the three-valued semantics,
the truth of $\K_a \phi$ only implies that $\phi$ is true \emph{if defined}.

In this paper, we propose an alternative way to do epistemic logic on impure simplicial complexes,
which seems more well-behaved both conceptually and technically.
Our strategy is to \emph{avoid} talking about dead agents in any non-trivial way.
For example, let us reconsider the example $\K_a \K_b p_c$ --- $a$ knows that $b$ knows that $c$ satisfies $p$.
For simplicity, we assume that $a$ is a living agent (as in the above model),
so that it makes perfect sense to talk about $a$'s epistemic states.
Then, while the truth value of $\K_a \K_b p_c$ is difficult to tell,
the following four ``clarifications'' of it all have obvious truth values:\\
\begin{tabular}{cl}
(i) & $a$ knows that (\emph{if $b$ is alive, then} $b$ knows that (\emph{if $c$ is alive, then} $c$ satisfies $p$)).\\
(ii) & $a$ knows that (\emph{if $b$ is alive, then} $b$ knows that (\emph{$c$ is alive and} $c$ satisfies $p$)).\\
(iii) & $a$ knows that (\emph{$b$ is alive and} $b$ knows that (\emph{if $c$ is alive, then} $c$ satisfies $p$)).\\
(iv) & $a$ knows that (\emph{$b$ is alive and} $b$ knows that (\emph{$c$ is alive and} $c$ satisfies $p$)).
\end{tabular}\\
Unlike the original expression, these clarifications involve only properties and epistemic states of \emph{living} agents, and it is quite clear that at $\SC, F$, (i), (ii) are true, while (iii), (iv) are false.
Then, instead of the ordinary propositional modal language which allows dubious expressions like $\K_a \K_b p_c$,
if we can work in a more delicate language, in which we are only allowed to talk about properties and epistemic states of living agents, then we should be able to define a fairly intuitive semantics without entering any serious philosophical controversy concerning the states of dead agents,
since philosophically controversial expressions would be excluded \emph{syntactically} from our language.

This is our motivation to introduce a special kind of \emph{term-modal language}, which is a \emph{first-order modal language} with unary predicates (viewed as labeled propositional letters), 
modal operators labeled by variables,
and \emph{assignment operators} of the form $[x \assn a] \phi$ for each agent $a$.
Intuitively, $[x \assn a] \phi$ says that \emph{if $a$ is alive}, then after assigning $a$ to $x$, $\phi$ is true; its dual, $\lr{x \assn a} \phi$, says that \emph{$a$ is alive and} after assigning $a$ to $x$, $\phi$ is true.
In this language, unary predicates and modal operators are labeled only by \emph{variables}, which refer to living agents only; on the other hand, 
names of agents appear only in assignment operators.
Thus, whenever we want to talk about an agent directly,
we are forced to use the corresponding assignment operators,
which clarify the existential state of the agent in question before we start talking about any of their concrete properties and epistemic states.
As a result, in such a language, 
sentences like (i) to (iv) above can still be expressed:
for example, assuming the existence of $a$, (ii) can be formalized as $\lr{x \assn a} \K_x [y \assn b] \K_y \lr{z \assn c} p_z$, while (iii) can be formalized as $\lr{x \assn a} \K_x \lr{y \assn b} \K_y [z \assn c] p_z$;
in contrast, dubious expressions like $\K_a \K_b p_c$ are no longer inexpressible.

It should be noted that in the literature, 
assignment operators are usually of the form $[x \assn t] \phi$, where $t$ is a \emph{term}.
This kind of assignment operators are firstly introduced in \cite{ea30231294d84616b42bc480e77c90a7} as a part of a more expressive first-order dynamic language;
more closely related to our work, a quantifier-free term-modal logic with assignment operators is studied in \cite{DBLP:journals/apal/WangWS22}.
The semantics of $[x \assn t] \phi$ is that, after assigning the value of $t$ to $x$, $\phi$ is true.
As noted in \cite{ea30231294d84616b42bc480e77c90a7}, this kind of assignment operators are also closely related to the notion of \emph{predicate abstraction} introduced in \cite{Stalnaker1968-STAAIF}:
$[x \assn t] \phi$ is equivalent with $\lr{\lambda x. \phi} (t)$.

Our assignment operators differ from the above kind of assignment operators in many important ways.
Conceptually, assignment operators of the above kind (as well as predicate abstraction) usually work a tool to capture the \emph{de re} / \emph{de dicto} distinction in first-order modal logic;
but we use assignment operators to clarify the existential states of agents. 
Technically, in $[x \assn t] \phi$, $t$ is also a term of the language, which can be used to label predicates or modal operators;
but in our operator $[x \assn a] \phi$, $a$ is \emph{not} a legitimate term, and can only appear within the assignment operator.
Also, there is no difference between $[x \assn t]\phi$ and its dual $\lr{x \assn t} \phi$,
but in our language, there is an essential difference between $[x \assn a] \phi$ and $\lr{x \assn a} \phi$: only the latter entails the existence of $a$.

In this paper, then, we will develop an epistemic logic for impure simplicial complexes based on the more delicate language with assignment operators of the form $[x \assn a] \phi$.  

The structure of the paper is as follows.
In Section \ref{sec.language}, we formally introduce the language and its semantics.
Since we are working with a kind of first-order modal language, we will also give it a natural \emph{first-order Kripke semantics}, which should help us better understand the simplicial semantics both conceptually and technically.
We will also introduce two notions of bisimulation to characterize the expressivity of the two semantics respectively.
Next, in Section \ref{sec.corr}, we show that for a special class of first-order Kripke models called \emph{local epistemic models}, the first-order Kripke semantics is equivalent to the simplicial semantics;
this resembles the correspondence between simplicial complexes and propositional Kripke models proved in \cite{GOUBAULT2021104597,10.1007/978-3-030-88853-4_3, 10.1093/logcom/exae055}.
Then, in Section \ref{sec.axiom}, we offer a complete axiomatization of the epistemic logic based on our language,
and show that our language has a notion of \emph{assignment normal form}.
Then, in Section \ref{sec.ik},
we study a kind of \emph{intensional} distributed knowledge that can be naturally expressed in our language.
Finally, in Section \ref{sec.con},
we summarize our results and consider directions for future work.

\section{Language, Semantics and Expressivity}\label{sec.language}
We first introduce the language and semantics.
Fix a non-empty finite set $\mathbf{A}$ of agents
and a countable set $\mathbf{X}$ of variables (let $\mathbf{A} \cap \mathbf{X} = \emptyset$).
Also let $\mathbf{P}$ be a countable set of unary predicates / propositional letters to be labeled.
Then, we define the language as follows.

\begin{definition}[$\LL^\sassn$]
Formulas in $\LL^\sassn$, together with the notion of \emph{free variables} of a formula, are defined recursively as follows:\footnote{
We need to define them at the same time,
since we need to use the notion of free variables when forming modal formulas.
}
\begin{center}
$\phi :: =  p_x \mid \top \mid \neg \phi \mid (\phi \wedge \phi) \mid [x \assn a] \phi \mid \K_X \alpha$\\[0.5em]

where $p \in \mathbf{P}$, $x \in \mathbf{X}$, $a \in \mathbf{A}$, $X \subseteq \mathbf{X}$ is finite ($X$ can be empty), and 
$\alpha$ is an $\LL^\sassn$-formula s.t.\ $FV(\alpha) = \emptyset$.\\[0.5em]

$FV(p_x) = \{x\}$; $FV(\top) = \emptyset$; $FV(\neg \phi) = FV(\phi)$; $FV(\phi \wedge \psi) = FV(\phi) \cup FV(\psi)$;\\
$FV([x \assn a] \phi) = FV(\phi) \setminus \{x\}$; $FV(\K_X \alpha) = X$.

\end{center}

Other Boolean connectives are defined in the usual way.

$\lr{x \assn a} \phi$ is the abbreviation for $\neg [x \assn a] \neg \phi$,
and $\DK_X \alpha$ is the abbreviation for $\neg \K_X \neg \alpha$.

We use $\phi[y/x]$ to denote the formula obtained by replacing all free occurrences of $x$ in $\phi$ with $y$,
and we say $\phi[y/x]$ is \emph{admissible} if $x$ has no free occurrence within the scope of $[y \assn a]$ for any $a \in \mathbf{A}$.

Let $\LL_0^\sassn$ denote the set of all formulas in $\LL^\sassn$ with no free variable --- 
as usual, we also call them $\LL^\sassn$-\emph{sentences}.
(In this paper, we usually use $\phi, \psi, \chi$ to denote $\LL^\sassn$-formulas in general,
and use $\alpha, \beta, \gamma$ to denote $\LL^\sassn$-sentences.)
\end{definition}

Note that the primitive notion of knowledge in our language is \emph{distributed knowledge}:
intuitively,
$\K_X \alpha$ says that the group of agents denoted by $X$ knows that $\alpha$ after putting their information together.
\footnote{
A similar kind of distributed knowledge operator labeled by a set of variables is introduced in \cite{NAUMOV_TAO_2019}.
}
As we shall see,
the distributed knowledge operators enable us to express certain model properties in a more natural way.

Next, we introduce the simplicial complex semantics and first-order Kripke semantics for this language.
The notion of simplicial models and first-order Kripke models are defined as follows:

\begin{definition}[Simplicial Model]
A \emph{simplicial model} is a $4$-tuple $\SC = (\mathcal{V}, C, \chi, \ell)$, where $\mathcal{V} \neq \emptyset$ is the set of \emph{vertices},
$C \subseteq \wp(\mathcal{V})$ is the set of \emph{faces} (or \emph{simplexes}) satisfying that 
(i) $\emptyset \notin C$, 
(ii) if $\emptyset \neq Y \subseteq X \in C$, then $Y \in C$, and 
(iii) for all $v \in \mathcal{V}$, $\{v\} \in C$;
$\chi: \mathcal{V} \to \mathbf{A}$ is the coloring function s.t.\ for all $X \in C$, $\chi_{|X}$ is an injection,
and $\ell: \mathbf{P} \to \wp(\mathcal{V})$ is the labeling function. 

Given a simplicial model $\SC = (\mathcal{V}, C, \chi, \ell)$,
let $\F(C) = \{F \in C \mid \text{for all } X \in C \setminus \{F\}, F \not \subseteq X\}$.
Elements in $\F(C)$ are called the \emph{facets} in $C$.

For convenience, we will sometimes use notations like $\mathcal{V}^\SC$, $C^\SC$, $\chi^\SC$ and $\ell^\SC$ to denote the corresponding components of a simplicial model $\SC$.
\end{definition}

\begin{definition}[First-Order Kripke Model]
In this paper, by a \emph{first-order Kripke model}, we mean a $4$-tuple $\M = (W, \delta, \{R_a\}_{a \in \mathbf{A}}, \rho)$, where
$W \neq \emptyset$ is the set of possible worlds,
$\delta: W \to \wp(\mathbf{A}) \setminus \{\emptyset\}$ assigns each $w \in W$ a local domain of agents,
for each $a \in \mathbf{A}$, $R_a \subseteq W \times W$ is the accessibility relation for $a$ s.t.\ 
for all $w \in W$, if $a \notin \delta(w)$, then $R_a(w) = \emptyset$,
and $\rho: \mathbf{P} \times W \to \wp(\mathbf{A})$ is the interpretation function that assigns each pair of $p \in \mathbf{P}$ and $w \in W$ a subset of $\delta(w)$.\footnote{
Note that according to our definition,
in every world,
only existing agents have non-trivial epistemic relations,
and the extension of a predicate only includes existing agents.
We make these assumptions here because we do not want to decide the properties and epistemic states of non-existing agents on the conceptual level,
and also because they help us establish a cleaner correspondence between simplicial models and first-order Kripke models.
}

For convenience, we will sometimes use notations like $W^\M$, $\delta^\M$, $R_a^\M$ and $\rho^\M$ to denote the corresponding components of a first-order Kripke model $\M$.
\end{definition}

Here, we do not set any further constraint on the accessibility relations in a first-order Kripke model,
so they are in a sense more \emph{free} than simplicial models.
In fact, as we shall see in Section \ref{sec.corr},
only a special class of first-order Kripke models corresponds
to simplicial models.
But our results concerning first-order Kripke models presented in this section do not rely on any such special properties.

In order to define the semantics for open formulas,
we also need the following notion of \emph{admissible assignments}.
Intuitively, such assignments ensure that an open formula only mentions \emph{existent} agents.

\begin{definition}[Admissible Assignment]
An \emph{assignment} is a function from $\mathbf{X}$ to $\mathbf{A}$.

For a simplicial model $\SC = (\mathcal{V}, C, \chi, \ell)$ and $F \in \F(C)$, $\phi \in \LL^\sassn$,
we say an assignment $\sigma$ is \emph{admissible} for $\SC, F$ and $\phi$, if $\sigma[FV(\phi)] \subseteq \chi[F]$.

Similarly, for a first-order Kripke model $\M = (W, \delta, R, \rho)$, and $w \in W$, $\phi \in \LL^\sassn$,
we say an assignment $\sigma$ is \emph{admissible} for $\M,w$ and $\phi$, if $\sigma[FV(\phi)] \subseteq \delta(w)$.
\end{definition}

Then, the two semantics are defined as follows.

\begin{definition}[The Semantics]
The satisfaction relation $\Vdash$ for simplicial models and the satisfaction relation $\vDash$ for first-order Kripke models are defined recursively as follows when the corresponding assignment is \emph{admissible} for the pointed model and formula in question
(the parts for $\top$ and Boolean connectives are omitted):\footnote{
Some remarks:
(i) For a \emph{sentence} $\alpha$, any assignment is admissible.
This guarantees that the semantics for $\K_X$ is well-defined.
(ii) In the first-order Kripke semantics for $\K_X \alpha$, we make the convention that $\bigcap_{a \in \emptyset} R_a(w) = W$,
so $\K_\emptyset$ works as the universal modality. 
}
\begin{center}
{\small
\begin{tabular}{|lcl|lcl|}
\multicolumn{3}{|l|}{Simplicial Model} & \multicolumn{3}{l}{First-Order Kripke Model}\\
\hline
$\SC, F, \sigma \Vdash p_x$ & iff & $\sigma(x) \in \chi[F \cap \ell(p)]$ & 
$\M, w, \sigma \vDash p_x$ & iff & $\sigma(x) \in \rho(p, w)$\\[0.3em]
$\SC, F, \sigma \Vdash [x \assn a] \phi$ & iff & if $a \in \chi[F]$, then & 
$\M, w, \sigma \vDash [x \assn a] \phi$ & iff & if $a \in \delta(w)$, then\\ 
& & $\SC,F,\sigma[x \mapsto a] \Vdash \phi$ & 
& & $\M,w,\sigma[x \mapsto a] \vDash \phi$\\[0.3em]
$\SC, F, \sigma \Vdash \K_X \alpha$ & iff & for all $G \in \F(C)$ s.t. &
$\M,w,\sigma \vDash \K_X \alpha$ & iff & for all $v \in W$ s.t.\\
& & $\sigma[X] \subseteq \chi[F \cap G]$, &
& & $v \in \bigcap_{a \in \sigma[X]} R_a(w)$\\
& & $\SC, G, \sigma \Vdash \alpha$ & 
& & $\M,v,\sigma \vDash \alpha$\\
\hline
\end{tabular}
}
\end{center}

Note that the truth value of an $\LL^\sassn$-sentence $\alpha$ has nothing to do with the assignment.
Thus, for $\alpha \in \LL_0^\sassn$,
we may simply write $\SC, F \Vdash \alpha$ (resp.\ $\M,w \vDash \alpha$) when $\SC, F,\sigma \Vdash \alpha$ (resp.\ $\M,w,\sigma \vDash \alpha$) for some assignment $\sigma$.
\end{definition}

It is not hard to see that the semantics of the assignment operators is exactly the one we described in Section \ref{sec.intro}.
Also note that $\lr{x \assn a} \top$ simply says that \emph{$a$ exists},
so in this sense, we can express the \emph{existential predicate} in our language.

Now, we introduce two notions of bisimulation to characterize the expressivity of the above two semantics, respectively.
Note that our notion of bisimulation for simplicial models is essentially the same as the kind of bisimulation introduced in \cite{SCBIS2024}.

\begin{definition}[Bisimulations]
Let $\SC$, $\SD$ be simplicial models,
and let $\M$, $\N$ be first-order Kripke models.
A relation $Z \subseteq \F(C^\SC) \times \F(C^\SD)$ (resp.\ $Z \subseteq W^\M \times W^\N$) is a \emph{bisimulation} between $\SC$ and $\SD$ (resp.\ $\M$ and $\N$), if the following conditions are satisfied for any $(F, G) \in Z$ (resp.\ $(w,v) \in Z$):
\begin{center}
{\small
\begin{tabular}{l|l|l}
& Simplicial Models & First-Order Kripke Models\\
\hline
& \\[-1em]
\multirow{2}{*}{(Inv)} & 
$\chi^\SC[F] = \chi^\SD[G]$, and for all $p \in \mathbf{P}$, & 
$\delta^\M(w) = \delta^\N(v)$, and for all $p \in \mathbf{P}$,\\
& $\chi^\SC[F \cap \ell^\SC(p)] = \chi^\SD[F \cap \ell^\SD(p)]$ & 
$\rho^\M(p, w) = \rho^\N(p, v)$\\[0.3em]
\hline
& \\[-1em]
& 
For all $A \subseteq \mathbf{A}$ and $F' \in \F(C^\SC)$ s.t. & 
For all $A \subseteq \mathbf{A}$ and $w' \in W^\M$ s.t. \\
(Zig) & $A \subseteq \chi^\SC[F \cap F']$, there is $G' \in \F(C^\SD)$& 
$w' \in \bigcap_{a \in A} R^\M_a(w)$, there is $v' \in W^\N$\\
& s.t.\ $A \subseteq \chi^\SD[G \cap G']$ and $(F', G') \in Z$ & 
s.t.\ $v' \in \bigcap_{a \in A} R_a^\N(v)$ and $(w', v') \in Z$\\[0.3em]
\hline
&\\[-1em]
& 
For all $A \subseteq \mathbf{A}$ and $G' \in \F(C^\SD)$ s.t. & 
For all $A \subseteq \mathbf{A}$ and $v' \in W^\N$ s.t. \\
(Zag) & $A \subseteq \chi^\SD[G \cap G']$, there is $F' \in \F(C^\SC)$& 
$v' \in \bigcap_{a \in A} R^\N_a(v)$, there is $w' \in W^\M$\\
& s.t.\ $A \subseteq \chi^\SC[F \cap F']$ and $(F', G') \in Z$ & 
s.t.\ $w' \in \bigcap_{a \in A} R_a^\M(w)$ and $(w', v') \in Z$\\
\end{tabular}
}
\end{center}

We say $\SC, F$ and $\SD, G$ (resp.\ $\M,w$ and $\N,v$) are \emph{bisimlar},
if there is a bisimulation between $\SC$ and $\SD$ (resp.\ $\M$ and $\N$) s.t.\ $(F, G) \in Z$ (resp.\ $(w, v) \in Z$).
\end{definition}

For any simplicial models $\SC, F$, $\SD,G$
and any first-order Kripke models $\M,w$, $\N,v$,
we say $\SC, F$ and $\SD, G$ (resp.\ $\M,w$ and $\N,v$) are \emph{$\LL^\sassn$-logically equivalent},
denoted by $\SC,F \equiv_{\LL^\sassn} \SD,G$ (resp.\ $\M,w \equiv_{\LL^\sassn} \N,v$),
if for any $\alpha \in \LL^\sassn_0$,
$\SC, F \Vdash \alpha$ iff $\SD, G \Vdash \alpha$ (resp.\ $\M,w \vDash \alpha$ iff $\N,v \vDash \alpha$).
Then, it is rather routine to prove the following theorem: 

\begin{theorem}
For any simplicial models $\SC,F$, $\SD,G$
and any first-order Kripke models $\M,w$, $\N,v$, 
\begin{center}
$\SC,F \bis \SD,G \implies \SC,F \equiv_{\LL^\sassn} \SD,G 
\quad \text{and} \quad
\M,w \bis \N,v \implies \M,w \equiv_{\LL^\sassn} \N,v$
\end{center}
\end{theorem}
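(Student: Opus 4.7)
The plan is to strengthen the statement to handle open formulas and then induct on formula complexity, proving both halves in parallel. Specifically, I would establish: for any bisimulation $Z$, any $(F,G) \in Z$ (respectively $(w,v) \in Z$), any $\phi \in \LL^\sassn$, and any assignment $\sigma$ admissible for $(\SC, F, \phi)$ (which, by clause (Inv), is automatically admissible for $(\SD, G, \phi)$ too, since $\chi^\SC[F] = \chi^\SD[G]$), we have $\SC, F, \sigma \Vdash \phi$ iff $\SD, G, \sigma \Vdash \phi$, and symmetrically for first-order Kripke models using $\delta^\M(w) = \delta^\N(v)$. The original theorem then falls out by specializing to sentences, for which admissibility is vacuous.

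The base case $\phi = p_x$ is exactly the second half of (Inv): $\sigma(x) \in \chi^\SC[F \cap \ell^\SC(p)]$ iff $\sigma(x) \in \chi^\SD[G \cap \ell^\SD(p)]$ (resp.\ $\sigma(x) \in \rho^\M(p,w)$ iff $\sigma(x) \in \rho^\N(p,v)$). Boolean cases are routine. For $\phi = [x \assn a]\psi$, the invariance clause gives $a \in \chi^\SC[F]$ iff $a \in \chi^\SD[G]$; if this fails, both sides are vacuously true; otherwise I apply the inductive hypothesis to $\psi$ with the updated assignment $\sigma[x \mapsto a]$, and admissibility is preserved because $FV(\psi) \subseteq FV(\phi) \cup \{x\}$ and the new assignment still lands in the common agent set.

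The modal case $\phi = \K_X \alpha$ is where the (Zig)/(Zag) clauses are used. Note that $\alpha$ is a sentence with $FV(\phi) = X$, so $\sigma[X]$ is contained in $\chi^\SC[F] = \chi^\SD[G]$. Assume $\SC, F, \sigma \Vdash \K_X \alpha$; to show $\SD, G, \sigma \vDash \K_X \alpha$, take any $G' \in \F(C^\SD)$ with $\sigma[X] \subseteq \chi^\SD[G \cap G']$, then apply (Zag) with $A := \sigma[X]$ to obtain $F' \in \F(C^\SC)$ with $\sigma[X] \subseteq \chi^\SC[F \cap F']$ and $(F',G') \in Z$. The hypothesis yields $\SC, F', \sigma \Vdash \alpha$, and the inductive hypothesis transferred along the pair $(F', G')$ gives $\SD, G', \sigma \Vdash \alpha$. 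The converse direction uses (Zig) symmetrically. The first-order Kripke version is word-for-word analogous, with the overlap condition $\sigma[X] \subseteq \chi[\cdot \cap \cdot]$ replaced by $\cdot \in \bigcap_{a \in \sigma[X]} R_a(\cdot)$.

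The main obstacle is simply bookkeeping: one must verify at every inductive step that admissibility of assignments is preserved, and in the modal case one must be careful to instantiate $A$ as the image $\sigma[X]$ (rather than $X$ itself) so that the bisimulation clauses, which quantify over subsets of $\mathbf{A}$, deliver exactly the successor facet (resp.\ world) needed. None of this is conceptually deep, but it is the point at which the design of the bisimulation clauses (labeling by subsets of agents, not of variables) pays off.
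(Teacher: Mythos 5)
Your proposal is correct and is essentially the paper's own argument: the paper's (sketched) proof is exactly this induction on formulas, strengthened to open formulas with admissible assignments, using (Inv) for atoms and the assignment operator and (Zig)/(Zag) instantiated at $A = \sigma[X]$ for the knowledge case, with the Kripke half word-for-word analogous. You have simply filled in the bookkeeping the paper leaves implicit.
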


\begin{proof}
(Sketch) If $Z$ is a bisimulation between two simplicial models $\SC$ and $\SD$, then we can show by induction that for any $\phi \in \LL^\sassn$, any $(F, G) \in Z$ and any assignment $\sigma$,
$\sigma$ is admissible for $\SC, F$ and $\phi$ iff it is admissible for $\SD, G$ and $\phi$,
and when it is admissible,
$\SC,F,\sigma \Vdash \phi$ iff $\SD, G, \sigma \Vdash \phi$.
The case for first-order Kripke models is basically the same.
\end{proof}

In order to show that both our semantics have the Hennessy-Milner property,
we define the following notions of saturation:

\begin{definition}[Saturation]
For any simplicial model $\SC = (\mathcal{V}, C, \chi, \ell)$ and any first-order Kripke model
$\M = (W, \delta, \{R_a\}_{a \in \mathbf{A}}, \rho)$,
we say $\SC$ (resp. $\M$) is \emph{saturated}, if the following hold:
\begin{center}
{\small
\begin{tabular}{l|l}
Simplicial Model & First-Order Kripke Model\\
\hline
&\\[-1em]
For any $A \subseteq \mathbf{A}$, $F \in \F(C)$ and $\Delta \subseteq \LL_0^\sassn$, & 
For any $A \subseteq \mathbf{A}$, $w \in W$ and $\Delta \subseteq \LL_0^\sassn$, \\
if $\Delta$ is finitely satisfiable in $\{G \in \F(C) \mid A \subseteq \chi[F \cap G]\}$, & 
if $\Delta$ is finitely satisfiable in $\bigcap_{a \in A} R_a(w)$,\\
then there is $G \in \F(C)$ s.t.\ $A \subseteq \chi[F \cap G]$
and $\SC, G \Vdash \Delta$ & 
then there is $v \in \bigcap_{a \in A}R_a(w)$ s.t.\ 
$\M,v \vDash \Delta$
\end{tabular}
}
\end{center}
\end{definition}

Then, we can formulate the following theorem (the proof can be found in Appendix \ref{sec.appendix}):

\begin{theorem}\label{thm.hmp}
For any saturated simplicial models $\SC,F$, $\SD,G$
and any saturated first-Kripke models $\M,w$, $\N,v$, 
\begin{center}
$\SC,F \bis \SD,G \iff \SC,F \equiv_{\LL^\sassn} \SD,G 
\quad \text{and} \quad
\M,w \bis \N,v \iff \M,w \equiv_{\LL^\sassn} \N,v$
\end{center}
\end{theorem}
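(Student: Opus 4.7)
The $\Rightarrow$ direction follows from the preceding theorem and does not use saturation, so I would focus entirely on the $\Leftarrow$ direction. The plan is the standard Hennessy--Milner recipe: take $Z$ to be the relation of $\LL^\sassn$-logical equivalence itself and show, under the saturation hypothesis, that $Z$ is a bisimulation. I would handle the simplicial and the first-order Kripke case in parallel, since the syntactic witnesses are identical and only the bookkeeping for ``$A$-accessibility'' differs (a shared set of agent-colors on $F \cap H$ versus membership in $\bigcap_{a \in A} R_a(w)$).

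For the (Inv) clause I would use two observations. The sentence $\lr{x \assn a}\top$ is true at a point iff $a$ is alive there, so logical equivalence immediately forces $\chi^\SC[F] = \chi^\SD[G]$ (resp.\ $\delta^\M(w) = \delta^\N(v)$); and $\lr{x \assn a} p_x$ is true iff $a$ is alive and has property $p$, which gives $\chi^\SC[F \cap \ell^\SC(p)] = \chi^\SD[G \cap \ell^\SD(p)]$ (resp.\ $\rho^\M(p,w) = \rho^\N(p,v)$).

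The main work is the Zig clause (Zag is symmetric). Given $(F, G) \in Z$, a finite $A = \{a_1, \ldots, a_n\} \subseteq \mathbf{A}$, and a facet $F'$ with $A \subseteq \chi^\SC[F \cap F']$, I would choose pairwise distinct variables $x_1, \ldots, x_n \in \mathbf{X}$ and set $X = \{x_1, \ldots, x_n\}$. The key device is the sentence
$$\phi_\alpha \;:=\; \lr{x_1 \assn a_1}\cdots\lr{x_n \assn a_n}\,\DK_X \alpha,$$
whose semantics unfolds directly to: $\phi_\alpha$ holds at a facet $H$ iff there exists $H'' \in \F(C)$ with $A \subseteq \chi[H \cap H'']$ and $\SC, H'' \Vdash \alpha$ (analogously in the Kripke case, replacing the face-intersection condition by $H'' \in \bigcap_{a \in A} R_a(H)$). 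Let $\Delta$ be the $\LL_0^\sassn$-theory of $F'$. For any finite $\Delta_0 \subseteq \Delta$, setting $\alpha := \bigwedge \Delta_0$, the facet $F'$ itself witnesses $\SC, F \Vdash \phi_\alpha$; hence $\SD, G \Vdash \phi_\alpha$ by $(F, G) \in Z$, producing some $G_0$ with $A \subseteq \chi^\SD[G \cap G_0]$ satisfying every formula in $\Delta_0$. Thus $\Delta$ is finitely satisfiable in $\{H \in \F(C^\SD) \mid A \subseteq \chi^\SD[G \cap H]\}$, and saturation delivers a single $G' \in \F(C^\SD)$ with $A \subseteq \chi^\SD[G \cap G']$ and $\SD, G' \Vdash \Delta$. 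The usual maximal-theory argument (if $\SD, G' \Vdash \beta$ but $\SC, F' \not\Vdash \beta$, then $\neg \beta \in \Delta$, contradicting $\SD, G' \Vdash \Delta$) then yields $(F', G') \in Z$.

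The only nontrivial ingredient is recognizing that the combination of the assignment operators with $\DK_X$ is exactly the right device to encode ``some $A$-accessible point satisfies $\alpha$'' as a \emph{closed} sentence that can be transported along $\equiv_{\LL^\sassn}$; note also that the outer $\lr{x_i \assn a_i}$ prefix automatically enforces $A \subseteq \chi[H]$ (resp.\ $A \subseteq \delta(w)$), which is why no separate treatment of liveness is needed. Once $\phi_\alpha$ is in place, the rest is the routine finite-satisfiability-plus-saturation argument, and I do not anticipate further obstacles.
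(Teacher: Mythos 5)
Your proposal is correct and follows essentially the same route as the paper: take $Z$ to be $\LL^\sassn$-logical equivalence, establish (Inv) via liveness/predicate sentences built from $\lr{x \assn a}$, and establish (Zig)/(Zag) by transporting sentences of the form $\lr{x_1 \assn a_1}\cdots\lr{x_n \assn a_n}\DK_X \alpha$ and invoking saturation. The only cosmetic difference is that you run the saturation step directly (finite satisfiability of the theory of $F'$) while the paper argues by contradiction; the underlying argument is the same.
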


\section{Equivalence of the Two Semantics}\label{sec.corr}
In \cite{GOUBAULT2021104597}, a correspondence between \emph{pure} simplicial models and a certain class of Kripke models is proved,
and such a correspondence is later generalized to \emph{impure} simplicial models in \cite{10.1007/978-3-030-88853-4_3,10.1093/logcom/exae055}.
In this section, we prove a similar result between impure simplicial models and a certain class of \emph{first-order} Kripke models.
Though we are considering first-order Kripke models rather than propositional ones,
and the simplicial complex semantics we are working with is also different from the one employed in \cite{10.1007/978-3-030-88853-4_3,10.1093/logcom/exae055} (which is three-valued), 
the idea and formal construction are still essentially the same as \cite{10.1007/978-3-030-88853-4_3,10.1093/logcom/exae055}.

Imitating the terminology in \cite{10.1007/978-3-030-88853-4_3,10.1093/logcom/exae055},
we introduce the following notion of \emph{local epistemic models}, 
which characterizes the first-order Kripke models that
correspond to simplicial models.

\begin{definition}[Local Epistemic Model]
For a first-order Kripke model $\M = (W, \delta, \{R_a\}_{a \in \mathbf{A}}, \rho)$, we say $\M$ is a \emph{local epistemic model}, if the following are satisfied:
\begin{itemize}
\item (Local $S5$) For all $a \in \mathbf{A}$, the restriction of $R_a$ to $\{w \in W \mid a \in \delta(w)\}$ is an equivalence relation;
\item (Individually Increasing Domain) For all $w, v \in W$ and $a \in \delta(w)$ s.t.\ $w R_a v$, $a \in \delta(v)$;
\item (Local Predicates) For all $p \in \mathbf{P}$, $w, v \in W$ and $a \in \rho(p, w)$ s.t.\ $w R_a v$,
$a \in \rho(p, v)$;
\item (Collectively Decreasing Domain) For all $w, v \in W$ s.t.\ $v \in \bigcap_{a \in \delta(w)} R_a(w)$, $\delta(v) \subseteq \delta(w)$.
\end{itemize}
We say $\M$ is a \emph{proper} local epistemic model, if it has the above properties plus the following one:
\begin{itemize}
\item (Properness) For all $w \in W$, $\bigcap_{a \in \delta(w)} R_a(w) = \{w\}$.
\end{itemize}
\end{definition}

All the above properties have rather intuitive meanings:
for example, 
(Individually Increasing Domain) says that every agent knows their own existence,
(Local Predicates) says that every agent knows their own atomic properties,
and (Collectively Decreasing Domain) says that at every world, all existing agents have the distributed knowledge that no other agent exists.
As we shall see in Section \ref{sec.axiom},
the first four properties also correspond to axioms in our axiomatization in a natural way.

Before showing the correspondence between local epistemic models and simplicial models,
we first show that our language cannot really distinguish local epistemic models and \emph{proper} local epistemic models,
in the sense that every local epistemic model has a logically equivalent ``properization'':

\begin{definition}
For any local epistemic model $\M = (W, \delta, \{R_a\}_{a \in \mathbf{A}}, \rho)$ and $w \in W$, for convenience, let $[w]_\delta = \bigcap_{a \in \delta(w)} R_a(w)$.
Then, the properization of $\M$ is 
$\M^{pr} = (W^{pr}, \delta^{pr}, \{R^{pr}_a\}_{a \in \mathbf{A}}, \rho^{pr})$, where $W^{pr} = \{[w]_\delta \mid w \in W\}$,
$\delta^{pr}([w]_\delta) = \{([w]_\delta, \delta(w)) \mid w \in W\}$,
$R^{pr}_a = \{([w]_\delta, [v]_\delta) \mid w R_a v\}$ for all $a \in \mathbf{A}$, 
and $\rho^{pr} = \{((p, [w]_\delta), \rho(p, w)) \mid p \in \mathbf{P}, w \in W\}$.
\end{definition}

\begin{proposition}\label{prop.pr}
For any local epistemic model $\M = (W, \delta, \{R_a\}_{a \in \mathbf{A}}, \rho)$, 
$\M^{pr}$ is a proper local epistemic model s.t.\ for any $w \in W$, $\M,w \bis \M^{pr}, [w]_\delta$
(and thus $\M,w \equiv_{\LL^\sassn} \M^{pr}, [w]_\delta$).

In particular, if $\M$ is already a proper local epistemic model,
then $\M \cong \M^{pr}$.
\end{proposition}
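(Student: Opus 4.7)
The plan is to proceed in four stages: (i) observe that the relation ``$v \in [w]_\delta$'' is a genuine equivalence relation on $W$, so that the quotient construction makes sense and all components of $\M^{pr}$ are well-defined; (ii) verify that $\M^{pr}$ satisfies the five conditions of a proper local epistemic model; (iii) show that $Z = \{(w, [w]_\delta) \mid w \in W\}$ is a bisimulation; (iv) handle the ``in particular'' clause.

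For (i), reflexivity of $v \in [w]_\delta$ follows from reflexivity of each $R_a$ (Local $S5$). For symmetry, if $v \in [w]_\delta$ then (Individually Increasing Domain) gives $\delta(w) \subseteq \delta(v)$, and (Collectively Decreasing Domain) gives $\delta(v) \subseteq \delta(w)$; so $\delta(v) = \delta(w)$, and symmetry of each $R_a$ (Local $S5$) gives $w \in [v]_\delta$. Transitivity follows from transitivity of each $R_a$ together with equality of the domains. With this in hand, the equality $\delta(v) = \delta(w)$ for $v \in [w]_\delta$ shows $\delta^{pr}$ is well-defined; for $\rho^{pr}$, (Local Predicates) applied in both directions (using that $w \sim v$ is symmetric) yields $\rho(p,w) = \rho(p,v)$; and for $R^{pr}_a$, given $[w]_\delta = [w']_\delta$, $[v]_\delta = [v']_\delta$, and $wR_a v$, note that if $wR_a v$ then $a \in \delta(w)$, so $w' R_a w R_a v R_a v'$ by Local $S5$, giving $w' R_a v'$ by transitivity.

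For (ii), Local $S5$, Individually Increasing Domain, and Local Predicates for $\M^{pr}$ are inherited via the quotient by a routine unwinding of the definition of $R^{pr}_a$. The crucial condition is Properness: suppose $[v]_\delta \in \bigcap_{a \in \delta(w)} R^{pr}_a([w]_\delta)$, so for each $a \in \delta(w)$ there exist $w_a \in [w]_\delta$ and $v_a \in [v]_\delta$ with $w_a R_a v_a$. Using Local $S5$ and $w R_a w_a$ (valid since $a \in \delta(w) = \delta(w_a)$ and $w_a \in [w]_\delta$), plus the analogous $v_a R_a v$ (for which one must first check $a \in \delta(v_a) = \delta(v)$ via IID), transitivity gives $w R_a v$. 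Hence $v \in [w]_\delta$ and thus $[v]_\delta = [w]_\delta$. Collectively Decreasing Domain then becomes immediate since the intersection equals the singleton.

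For (iii), invariance is clear from the definitions of $\delta^{pr}$ and $\rho^{pr}$. For Zig, given $w' \in \bigcap_{a \in A} R_a(w)$, take $[w']_\delta$: each $wR_a w'$ directly witnesses $[w]_\delta R^{pr}_a [w']_\delta$. The main obstacle is the Zag clause, where we are given a \emph{single} class $[v']_\delta$ in $\bigcap_{a \in A} R^{pr}_a([w]_\delta)$ and must produce a single world $v'' \in W$ in $\bigcap_{a \in A} R_a(w)$ with $[v''] _\delta= [v']_\delta$; the same collating argument used for Properness applies, yielding $v' \in \bigcap_{a \in A} R_a(w)$ itself as the witness. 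Finally, if $\M$ is already proper, then $[w]_\delta = \{w\}$ for every $w$, so $w \mapsto \{w\}$ is a bijection $W \to W^{pr}$, and its compatibility with $\delta$, $\rho$, and $R_a$ is immediate from the defining equations of $\M^{pr}$, establishing $\M \cong \M^{pr}$.
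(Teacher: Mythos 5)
Your proof is correct. The paper in fact gives no proof of this proposition (it is treated as a routine verification), and your argument supplies exactly the natural one: you establish that $v \in [w]_\delta$ is an equivalence relation with $\delta(v)=\delta(w)$ on each class (via Individually Increasing Domain together with Collectively Decreasing Domain, then Local $S5$ for symmetry and transitivity), that $R^{pr}_a$ is independent of representatives, and you correctly leverage this both for Properness and for the Zag clause of the bisimulation, with the isomorphism claim in the proper case following immediately from $[w]_\delta=\{w\}$.
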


Now, we prove the correspondence between simplicial models and local epistemic models.

The direction from simplicial models to local epistemic models is rather straightforward:
we just take the facets as worlds,
and the colors in a facet as its local domain.
The formal construction is as follows:

\begin{definition}
For any simplicial complex $\SC = (\mathcal{V}, C, \chi, \ell)$, let $\mathtt{LEM}(\SC) = (W^\SC, \delta^\SC, \{R^{\SC}_a\}_{a \in \mathbf{A}}, \rho^\SC)$, where $W^\SC = \F(C)$, 
$\delta^\SC = \{(F, \chi[F]) \mid F \in W^\SC\}$, $R_a^\SC = \{(F, G) \in W^\SC \times W^\SC \mid a \in \chi[F \cap G]\}$ for all $a \in \mathbf{A}$, and $\rho^\SC = \{((p, F), \chi[F \cap V(p)]) \mid p \in \mathbf{P}, F \in W^\SC\}$.
\end{definition}

The converse direction is a bit more complicated,
since we need to first construct a set of vertices from the first-order Kripke model, and then construct the facets from the vertices.
The idea is to use indistinguishability cells labeled by agents as vertices.
The formal construction is as follows:

\begin{definition}
For any local epistemic model $\M = (W, \delta, \{R_a\}_{a \in \mathbf{A}}, \rho)$ and $w \in W$, $a \in \mathbf{A}$,
for convenience, let $[w]_a = R_a(w)$,
and let $F^\M_w = \{(a, [w]_a) \mid a \in \delta(w)\}$.
Then, let $\mathtt{SC}(\M) = (\mathcal{V}^\M, C^\M, \chi^\M, \ell^\M)$, where
$\mathcal{V}^\M = \{(a, [w]_a) \mid w \in W, a \in \delta(w)\}$, 
$C^\M = \{X \subseteq \mathcal{V}^\M \mid X \subseteq F^\M_w \text{ for some } w \in W\} \setminus \{\emptyset\}$,\\
$\chi^\M(a, [w]_a) = a$ for all $(a, [w]_a) \in \mathcal{V}^\M$,
and $\ell^\M(p) = \{(a, [w]_a) \mid a \in \rho(p, w)\}$ for all $p \in \mathbf{P}$.
\end{definition}

\begin{proposition}
For any simplicial model $\SC = (\mathcal{V}, C, \chi, \ell)$, $\mathtt{LEM}(\SC)$ is a \emph{proper} local epistemic model s.t.\ for any $F \in \F(C)$ and $\alpha \in \LL_0^\sassn$,
$\SC, F \Vdash \alpha$ iff $\mathtt{LEM}(\SC), F \vDash \alpha$. 

For any local epistemic model $\M = (W, \delta, \{R_a\}_{a \in \mathbf{A}}, \rho)$, $\mathtt{SC}(\M)$ is a simplicial model with $\F(C^\M) = \{F^\M_w \mid w \in W\}$ s.t.\ for any $w \in W$ and $\alpha \in \LL_0^\sassn$,
$\M,w \vDash \alpha$ iff $\mathtt{SC}(\M), F^\M_w \Vdash \alpha$.
\end{proposition}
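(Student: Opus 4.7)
The plan is to handle the two halves of the proposition separately. In each direction I would first verify that the constructed model lies in the appropriate class, and then prove truth-preservation by a single induction on formula complexity over \emph{all} $\phi \in \LL^\sassn$ (not just sentences), since the inductive hypothesis must cover the $[x \assn a]\phi$ case where $\phi$ may be open. In both inductions the admissibility of $\sigma$ coincides on the two sides, because $\delta^\SC(F) = \chi[F]$ for $\mathtt{LEM}(\SC)$ and $\chi^\M[F^\M_w] = \delta(w)$ for $\mathtt{SC}(\M)$.

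For the $\mathtt{LEM}$ direction, the structural verification essentially reduces to two facts: $\chi$ is injective on each face, and facets are maximal. Local $S5$ holds because $F R_a^\SC G$ just says $F$ and $G$ share the unique $a$-colored vertex; individually increasing domain, local predicates, and collectively decreasing domain then each amount to one-line containment arguments. Properness itself follows from maximality: if $G \in \bigcap_{a \in \chi[F]} R_a^\SC(F)$, then $\chi[F] \subseteq \chi[F \cap G] \subseteq \chi[F]$, whence injectivity forces $F \subseteq G$, and since both are facets $F = G$. The induction for truth-preservation is then routine: the atomic and assignment clauses are immediate by unfolding definitions, and the modal clause rests on the direct equivalence $G \in \bigcap_{a \in \sigma[X]} R_a^\SC(F) \iff \sigma[X] \subseteq \chi[F \cap G]$.

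The $\mathtt{SC}$ direction is the more delicate half, and the main obstacle is the structural claim $\F(C^\M) = \{F^\M_w \mid w \in W\}$. That $\mathtt{SC}(\M)$ is a simplicial model at all is straightforward — downward closure and singletons are immediate from the definition of $C^\M$, and $\chi^\M$ is injective on any subset of some $F^\M_w$ since its elements have distinct first coordinates. To show each $F^\M_w$ is itself a facet, I would suppose $F^\M_w \subseteq F^\M_v$ and argue that for every $a \in \delta(w)$, membership of $(a, [w]_a)$ in $F^\M_v$ forces $a \in \delta(v)$ together with $R_a(w) = R_a(v)$; local $S5$ (reflexivity and symmetry) then yields $wR_av$, placing $v$ in $\bigcap_{a \in \delta(w)} R_a(w)$, after which collectively decreasing domain gives $\delta(v) \subseteq \delta(w)$, hence equality of domains and $F^\M_w = F^\M_v$. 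The converse inclusion — every facet equals some $F^\M_w$ — is immediate from the definition of $C^\M$ together with this maximality.

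For truth-preservation in this direction, the atomic case is the subtlest point: unfolding $\mathtt{SC}(\M), F^\M_w, \sigma \Vdash p_x$ with $a := \sigma(x)$ yields some $v \in W$ with $R_a(w) = R_a(v)$ and $a \in \rho(p, v)$; local $S5$ then produces $vR_aw$, and local predicates propagate the property to $a \in \rho(p, w)$, matching the first-order clause. The modal case hinges on the equivalence $\sigma[X] \subseteq \chi^\M[F^\M_w \cap F^\M_v] \iff v \in \bigcap_{a \in \sigma[X]} R_a(w)$, where the backward direction invokes individually increasing domain to supply $\sigma[X] \subseteq \delta(v)$; combining this with the identification $\F(C^\M) = \{F^\M_w\}$ converts the simplicial quantification into a first-order one, and the inductive hypothesis applied to the sentence $\alpha$ closes the case.
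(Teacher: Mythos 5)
Your proof is correct and follows the intended route: the paper states this proposition without a proof (treating it as routine, in the spirit of the correspondence constructions it cites), and your two structural verifications plus a single induction over all formulas with admissible assignments --- using $\delta^\SC(F)=\chi[F]$ and $\chi^\M[F^\M_w]=\delta(w)$ to match admissibility, maximality/injectivity for the $\mathtt{LEM}$ half, and the facet identification $\F(C^\M)=\{F^\M_w\mid w\in W\}$ for the $\mathtt{SC}$ half --- are exactly what is needed. The only step worth spelling out is that in the backward direction of your modal-case equivalence you need Local $S5$ in addition to Individually Increasing Domain: from $a\in\delta(w)$, $wR_av$ and $a\in\delta(v)$ you must conclude $R_a(w)=R_a(v)$ (equality of the $a$-classes), so that $(a,[w]_a)=(a,[v]_a)$ is a common vertex of $F^\M_w$ and $F^\M_v$ witnessing $a\in\chi^\M[F^\M_w\cap F^\M_v]$.
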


Finally, it is also interesting to note the following properties of $\mathtt{LEM}$ and $\mathtt{SC}$:

\begin{proposition}\label{prop.iso}
For any simplicial model $\SC$, $\SC \cong \mathtt{SC}(\mathtt{LEM}(\SC))$,
and for any local epistemic model $\M$, $\M^{pr} \cong \mathtt{LEM}(\mathtt{SC}(\M))$ (so for any \emph{proper} local epistemic model $\M$, $\M \cong \mathtt{LEM}(\mathtt{SC}(\M))$).
\end{proposition}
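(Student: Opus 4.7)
My plan is to establish both isomorphisms by constructing explicit bijections and then verifying the structure-preservation by a direct unfolding of definitions.

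For the first isomorphism $\SC \cong \mathtt{SC}(\mathtt{LEM}(\SC))$, the natural map to try is $\iota: \mathcal{V} \to \mathcal{V}^{\mathtt{LEM}(\SC)}$ defined by $\iota(v) = (\chi(v), \{G \in \F(C) \mid v \in G\})$. The crucial observation I would prove first is that for any facet $F$ containing $v$ we have $R^\SC_{\chi(v)}(F) = \{G \in \F(C) \mid v \in G\}$: the condition $\chi(v) \in \chi[F \cap G]$ forces the unique $\chi(v)$-colored vertex of $F$, which is $v$ by injectivity of $\chi|_F$, to lie in $G$. This identity immediately gives well-definedness and surjectivity of $\iota$; injectivity follows by applying injectivity of coloring to a common facet containing both candidate preimages. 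Preservation of color and labeling then drops out directly, and preservation of the face structure reduces to the fact that any face extends to a facet.

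For the second isomorphism $\M^{pr} \cong \mathtt{LEM}(\mathtt{SC}(\M))$, I would define $j: W^{pr} \to \F(C^{\mathtt{SC}(\M)})$ by $j([w]_\delta) = F^\M_w$. Well-definedness uses the local-epistemic-model properties in concert: if $v \in [w]_\delta$, then (Individually Increasing Domain) together with (Collectively Decreasing Domain) force $\delta(v) = \delta(w)$, and local $S5$ then yields $[v]_a = [w]_a$ for each $a \in \delta(w)$, so $F^\M_v = F^\M_w$. Injectivity is the converse reading, surjectivity is supplied by the preceding proposition, and structure-preservation breaks into three routine computations. The domain matches since $\chi^\M[F^\M_w] = \delta(w)$. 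For the accessibility relation, one computes $F^\M_w \cap F^\M_v = \{(a, [w]_a) \mid a \in \delta(w) \cap \delta(v), w R_a v\}$, so $a \in \chi^\M[F^\M_w \cap F^\M_v]$ is equivalent to $w R_a v$ after invoking (Individually Increasing Domain). For the interpretation, $(a, [w]_a) \in \ell^{\mathtt{SC}(\M)}(p)$ holds iff $a \in \rho(p, v)$ for some $v$ with $w R_a v$, which by local $S5$ and (Local Predicates) collapses to $a \in \rho(p, w)$. The special case $\M \cong \mathtt{LEM}(\mathtt{SC}(\M))$ for proper $\M$ then follows by chaining with Proposition \ref{prop.pr}, since such $\M$ is already isomorphic to $\M^{pr}$.

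The step I expect to be the main obstacle is the bookkeeping in the second part: each one of the four local-epistemic-model properties must be invoked at a specific place --- (Individually Increasing Domain) and (Collectively Decreasing Domain) for the domain equality in the well-definedness of $j$, local $S5$ for the accessibility calculation and for converting $[w]_a = [v]_a$ into $w R_a v$, and (Local Predicates) for the labeling calculation. Once the two key identities --- $R^\SC_{\chi(v)}(F) = \{G \mid v \in G\}$ on the simplicial side and $F^\M_w \cap F^\M_v = \{(a,[w]_a) \mid a \in \delta(w) \cap \delta(v), w R_a v\}$ on the Kripke side --- are firmly in hand, the remainder of the verification is a straightforward unfolding.
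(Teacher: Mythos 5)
Your proposal is correct and follows essentially the same route as the paper's own (sketched) proof: the paper also takes the map $v \mapsto (\chi(v), \{F \in \F(C) \mid v \in F\})$ for the simplicial direction and $[w]_\delta \mapsto F^\M_w$ (after checking that $[w]_\delta = [v]_\delta$ implies $F^\M_w = F^\M_v$) for the Kripke direction. You merely spell out the verification details --- the identity $R^\SC_{\chi(v)}(F) = \{G \mid v \in G\}$, the computation of $F^\M_w \cap F^\M_v$, and where each local-epistemic-model property is used --- which the paper leaves as ``we can check.''
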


\begin{proof}
(Sketch) Given a simplicial complex $\SC = (\mathcal{V}, C, \chi, \ell)$, we can check that the function which maps each $v \in \mathcal{V}$ to $(\chi(v), \{F \in \F(C) \mid v \in F\})$ is an isomorphism from $\SC$ to $\mathtt{SC}(\mathtt{LEM}(\SC))$.

On the other hand, given a local epistemic model $\M = (W, \delta, \{R_a\}_{a \in \mathbf{A}}, \rho)$, we can check that
if $[w]_\delta = [v]_\delta$, then $F^\M_w = F^\M_v$,
so the function which maps each $[w]_\delta$ to $F^\M_w$ is a well-defined isomorphism from $\M^{pr}$ to $\mathtt{LEM}(\mathtt{SC}(\M))$.
\end{proof}

\begin{proposition}
Let $\SC, F$, $\SD, G$ be arbitrary simplicial models,
and let $\M,w$, $\N,v$ be arbitrary local epistemic models.
Then, we have the following
\begin{center}
$\SC, F \bis \SD, G \iff \mathtt{LEM}(\SC), F \bis \mathtt{LEM}(\SD), G \quad \text{and} \quad \M,w \bis \N,v \iff \mathtt{SC}(\M), F^\M_w \bis \mathtt{SC}(\N), F^\N_v$
\end{center}
\end{proposition}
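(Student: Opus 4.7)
The plan is to dispatch the two equivalences in sequence. The first one, concerning $\mathtt{LEM}$, is essentially a direct verification that the two notions of bisimulation coincide under this construction; the second one then follows by combining the first with Propositions \ref{prop.pr} and \ref{prop.iso}, together with the routine fact that bisimilarity is closed under composition and along isomorphisms.

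For the first equivalence, I would unpack the definition of $\mathtt{LEM}(\SC)$: its worlds are precisely $\F(C^\SC)$, and its components satisfy $\delta^\SC(F) = \chi^\SC[F]$ and $\rho^\SC(p, F) = \chi^\SC[F \cap \ell^\SC(p)]$, together with the crucial identity, for any finite $A \subseteq \mathbf{A}$,
$$F' \in \bigcap_{a \in A} R_a^\SC(F) \iff A \subseteq \chi^\SC[F \cap F'].$$
Reading the invariance, zig, and zag clauses in the two columns of the bisimulation definition side by side, these identities show that every simplicial clause translates symbol for symbol into the corresponding first-order Kripke clause. Hence a relation $Z \subseteq \F(C^\SC) \times \F(C^\SD)$ is a simplicial bisimulation between $\SC$ and $\SD$ iff the very same $Z$ is a first-order Kripke bisimulation between $\mathtt{LEM}(\SC)$ and $\mathtt{LEM}(\SD)$, from which the first equivalence is immediate in both directions.

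For the second equivalence, I would combine the first with Propositions \ref{prop.pr} and \ref{prop.iso}. Concretely, Proposition \ref{prop.iso} yields an isomorphism $\M^{pr} \cong \mathtt{LEM}(\mathtt{SC}(\M))$ that sends $[w]_\delta$ to $F^\M_w$, and Proposition \ref{prop.pr} gives a bisimulation $\M, w \bis \M^{pr}, [w]_\delta$; since bisimulations are closed under composition and any isomorphism induces a bisimulation via its graph, I obtain $\M, w \bis \mathtt{LEM}(\mathtt{SC}(\M)), F^\M_w$, and symmetrically $\N, v \bis \mathtt{LEM}(\mathtt{SC}(\N)), F^\N_v$. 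The forward direction then chains $\mathtt{LEM}(\mathtt{SC}(\M)), F^\M_w \bis \M, w \bis \N, v \bis \mathtt{LEM}(\mathtt{SC}(\N)), F^\N_v$ and applies the first equivalence to transport this bisimilarity from the $\mathtt{LEM}$-side to the $\mathtt{SC}$-side; the converse direction simply runs the same chain in the opposite order.

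The main obstacle is pure bookkeeping: one must explicitly record the two routine closure properties of $\bis$ (composition of bisimulations and transfer along isomorphisms), and confirm that the isomorphism extracted in the proof of Proposition \ref{prop.iso} really does identify $[w]_\delta$ with $F^\M_w$ as pointed worlds. A small subtlety worth spelling out is the edge case $A = \emptyset$ in the zig and zag clauses: the simplicial side $\emptyset \subseteq \chi^\SC[F \cap F']$ holds vacuously for every $F'$, and under the convention that the empty intersection $\bigcap_{a \in \emptyset} R_a^\SC(F)$ equals $W^\SC$ the first-order Kripke side likewise quantifies over all worlds, so the two collapse to the same totality requirement on $Z$ and no real difficulty arises.
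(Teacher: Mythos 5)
Your proposal is correct, and it shares the paper's overall strategy (reduce everything to the $\mathtt{LEM}$/$\mathtt{SC}$ constructions plus Propositions \ref{prop.pr} and \ref{prop.iso}), but it handles the second equivalence differently. The paper verifies \emph{two} concrete transformation lemmas: that a simplicial bisimulation $Z$ is literally also a Kripke bisimulation between $\mathtt{LEM}(\SC)$ and $\mathtt{LEM}(\SD)$, and, conversely, that a Kripke bisimulation $Z$ between $\M$ and $\N$ induces the simplicial bisimulation $\{(F^\M_w, F^\N_v) \mid (w,v) \in Z\}$ between $\mathtt{SC}(\M)$ and $\mathtt{SC}(\N)$; the remaining directions are then obtained via Propositions \ref{prop.pr} and \ref{prop.iso}. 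You instead strengthen the first lemma to a genuine ``iff'' (justified, since under $\mathtt{LEM}$ the (Inv)/(Zig)/(Zag) clauses translate verbatim, including the $A=\emptyset$ case you rightly flag), and then you dispense with the paper's second construction entirely, obtaining the $\mathtt{SC}$-equivalence abstractly by transporting along $\M,w \bis \M^{pr},[w]_\delta \bis \mathtt{LEM}(\mathtt{SC}(\M)), F^\M_w$ and composing. What your route buys is that you never have to check directly that $\{(F^\M_w, F^\N_v)\}$ satisfies the simplicial bisimulation clauses; what it costs is that you must explicitly invoke (and, in a full write-up, prove) closure of this notion of bisimilarity under relational composition and under isomorphism, as well as the pointed form $[w]_\delta \mapsto F^\M_w$ of the isomorphism in Proposition \ref{prop.iso} --- all routine, and you correctly identify them as the bookkeeping obligations, so the argument goes through.
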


\begin{proof}
(Sketch) We can check that if $Z$ is a bisimulation between two simplicial models $\SC$ and $\SD$,
then $Z$ itself is also a bisimulation between $\mathtt{LEM}(\SC)$ and $\mathtt{LEM}(\SD)$;
conversely, if $Z$ is a bisimulation between two local epistemic models $\M$ and $\N$,
then $\{(F^\M_w, F^\N_v) \mid (w, v) \in Z\}$ is a bisimulation between $\mathtt{SC}(\M)$ and $\mathtt{SC}(\N)$.
Then, the conclusion can be proved using Proposition \ref{prop.pr} and \ref{prop.iso}.
\end{proof}

Below is a diagram of a simplicial model and a proper local epistemic model which correspond to each other via $\mathtt{LEM}$ and $\mathtt{SC}$ (up to isomorphism).
The reflexive arrows are omitted in the diagram of the local epistemic model.

\begin{center}
\begin{tikzpicture}
\tikzstyle{avertex}=[shape=circle,draw,minimum size=4mm,inner sep=0pt,fill=purple]
\tikzstyle{bvertex}=[shape=circle,draw,minimum size=4mm,inner sep=0pt,fill=orange]
\tikzstyle{cvertex}=[shape=circle,draw,minimum size=4mm,inner sep=0pt,fill=cyan]
\tikzstyle{dvertex}=[shape=circle,draw,minimum size=4mm,inner sep=0pt,fill=teal]
\tikzstyle{world}=[shape=circle,draw,minimum size=6mm,inner sep=0pt,fill=lightgray]
\tikzstyle{bworld}=[shape=circle,draw,minimum size=8mm,inner sep=0pt,fill=lightgray]

\node[dvertex] (d) at (0, 0) {$d$};
\node[bvertex] (b1) at (-1.2, 0) {$b$};
\node[bvertex] (b2) at (1.2, 0) {$b$};
\node[avertex] (a1) at (-0.6, 1.04) {$a$};
\node[avertex] (a2) at (0.6, -1.04) {$a$};
\node[cvertex] (c1) at (-0.6, -1.04) {$c$};
\node[cvertex] (c2) at (0.6, 1.04) {$c$};

\node[world] (w1) at (5, 1.04) {$ac$};
\node[bworld] (w2) at (5.9, 0.52) {$bcd$};
\node[world] (w3) at (5.9, -0.52) {$ab$};
\node[bworld] (w4) at (5, -1.04) {$acd$};
\node[world] (w5) at (4.1, -0.52) {$bc$};
\node[bworld] (w6) at (4.1, 0.52) {$abd$};

\node[] (la1) at (4.4,1.04) {$a$};
\node[] (la2) at (5.6,-1.04) {$a$};
\node[] (lb1) at (3.8,0) {$b$};
\node[] (lb2) at (6.2,0) {$b$};
\node[] (lc1) at (4.4,-1.04) {$c$};
\node[] (lc2) at (5.6,1.04) {$c$};
\node[] (ld1) at (5,0.3) {$d$};
\node[] (ld2) at (5.26,-0.15) {$d$};
\node[] (ld2) at (4.74,-0.15) {$d$};

\node[] (SEM) at (2.5,0.4) {\large$\xrightarrow[]{\mathtt{LEM}}$};
\node[] (SC) at (2.5,-0.4) {\large$\xleftarrow[\mathtt{SC}]{}$};

\begin{scope}[on background layer]
\path [fill=lightgray,draw] (d.center) to (b1.center) to (a1.center) to (d.center);
\path [fill=lightgray,draw] (d.center) to (c2.center) to (b2.center) to (d.center);
\path [fill=lightgray,draw] (d.center) to (c1.center) to (a2.center) to (d.center);
\path[thick,draw] (a1) to (c2);
\path[thick,draw] (b1) to (c1);
\path[thick,draw] (b2) to (a2);

\path[very thick,cyan,draw] (w1) to (w2);
\path[very thick,orange,draw] (w2) to (w3);
\path[very thick,purple,draw] (w3) to (w4);
\path[very thick,cyan,draw] (w4) to (w5);
\path[very thick,orange,draw] (w5) to (w6);
\path[very thick,purple,draw] (w6) to (w1);
\path[very thick,teal,draw] (w2) to (w4);
\path[very thick,teal,draw] (w4) to (w6);
\path[very thick,teal,draw] (w6) to (w2);
\end{scope}
\end{tikzpicture}
\end{center}

\section{Axiomatization and Normal Form}\label{sec.axiom}
In this section, we offer a strongly complete axiomatization for both semantics.
Our axiomatization is different from the existing ones in interesting ways:
on the one hand, our axiomatization is based on a \emph{normal modal logic}, which distinguishes it from the one presented in \cite{10.1093/logcom/exae055};
on the other hand, 
our axiomatization is \emph{not} based on $\mathbf{KB4}$ 
(it is worth noting that we have the $\mathtt{T}$-axiom for knowledge in our axiomatization), which distinguishes it from the one presented in \cite{goubault2022simplicialmodelkb4nepistemic}.
Below is our system:\footnote{
Some notes on notations:
for a string of variables $\vec{x} = x_1, ..., x_n$
and a string of agents $\vec{a} = a_1, ..., a_n$ of the same length,
let $\{\vec{x}\}$, $\{\vec{a}\}$ denote $\{x_1, ..., x_n\}$ and $\{a_1, ..., a_n\}$ respectively,
and let $[\vec{x} \assn \vec{a}] \phi$ be the abbreviation for 
$[x_1 \assn a_1] \cdots [x_n \assn a_n] \phi$.
In particular, when $\vec{x}$ is the empty string, $[\vec{x} \assn \vec{a}] \phi$ is just $\phi$.
Also let $\K_{\vec{x}} \alpha$ (resp.\ $\DK_{\vec{x}} \alpha$) be the abbreviation for $\K_{\{\vec{x}\}} \alpha$ (resp.\ $\DK_{\{\vec{x}\}} \alpha$).
}
\begin{center}
{\small
The System $\mathbf{LEL}^\sassn$
\begin{tabular}{ll|ll}
\multicolumn{2}{l}{\textbf{Axiom Set I}} & 
\multicolumn{2}{l}{\textbf{Axiom Set II}}\\
\hline 
&&&\\[-1em]
$\mathtt{TAUT}$ & All propositional tautologies & 
$\mathtt{T}^\K$ & $\K_X \alpha \to \alpha$\\
$\mathtt{K}^\K$ & $\K_X (\alpha \to \beta) \to (\K_X \alpha \to \K_X \beta)$ & 
$\mathtt{KNI}$ & $[\vec{x} \assn \vec{a}](\neg \K_{\vec{x}} \alpha \to \K_{\vec{x}} [\vec{x} \assn \vec{a}] \neg \K_{\vec{x}} \alpha)$\\
$\mathtt{MONO}^\K$ & $\K_X \alpha \to \K_Y \alpha$ $\,$ ($X \subseteq Y$) & 
$\mathtt{EPI}$ & $[x \assn a] \K_{x} \lr{x \assn a} \top$\\
$\mathtt{K}^\sassn$ & $[x \assn a] (\phi \to \psi) \to ([x \assn a] \phi \to [x \assn a] \psi)$ & 
$\mathtt{API}$ & $[x \assn a] (p_x \to \K_{x} [x \assn a] p_x)$\\
$\mathtt{DET}^\sassn$ & $\lr{x \assn a} \phi \to [x \assn a] \phi$ & 
$\mathtt{ENI}$ & $[\vec{x} \assn \vec{a}](\bigwedge_{b \in B} [x \assn b] \bot \to \K_{\vec{x}}\bigwedge_{b \in B} [x \assn b] \bot)$\\
$\mathtt{TR}^\sassn$ & $\phi \to [x \assn a] \phi$ $\,$ ($x \notin FV(\phi)$)& 
& (where $B = \mathbf{A} \setminus \{\vec{a}\}$)\\
$\mathtt{SUB}^\sassn$ & $[y \assn a] ([x \assn a] \phi \to \phi[y/x])$ & 
\multicolumn{2}{l}{\textbf{Rules}}\\ 
\cline{3-4}
& ($\phi[y/x]$ is admissible) & 
\multicolumn{2}{l}{\multirow{3}{*}{$\displaystyle{
\mathtt{MP} \; \; \frac{\phi, \phi \to \psi}{\psi}
\quad
\mathtt{NEC}^\K \; \; \frac{\alpha}{\K_\emptyset \alpha}
\quad
\mathtt{NEC}^\sassn \; \; \frac{\phi}{[x \assn a] \phi}
}$}}\\
$\mathtt{COM}^\sassn$ & $[x \assn a] [y \assn b] \phi \to [y \assn b] [x \assn a] \phi$ $\,$ ($x \neq y$)\\
$\mathtt{UI}^\sassn$ & $\bigwedge_{a \in \mathbf{A}}[x \assn a] \phi \to \phi$\\[0.3em]
\hline 
\\[-0.5em]
\end{tabular}
}
\end{center}

Note that \textbf{Axiom Set I} contains the basic axioms for the distributed knowledge operator and the assignment operator, among which $\mathtt{SUB}^\sassn$ is probably the most interesting:
it says that in an ``environment'' where $a$ is already assigned to $y$,
assigning the same agent $a$ to $x$ \emph{semantically} is equivalent as substituting $y$ for $x$ \emph{syntatically} (which should remind the reader of the so-called \emph{Substitution Lemma} in first-order logic).
On the other hand, axioms in \textbf{Axiom Set II} characterize the special properties of local epistemic models --- 
$\mathtt{T}^\K$ and $\mathtt{KNI}$ correspond to (Local $S5$),
$\mathtt{EPI}$ corresponds to (Individually Increasing Domain),
$\mathtt{API}$ corresponds to (Local Predicates),
and $\mathtt{ENI}$ corresponds to (Collectively Decreasing Domain).
It should be noted that $\mathbf{LEL}^{\sassn}$ is \emph{not} closed under uniform substitution for labeled propositional letters:
this is because $\mathtt{API}$ only holds for atomic propositions.

Below are some useful theorems of $\mathbf{LEL}^\sassn$
(the proof can be found in Appendix \ref{sec.appendix}):

\begin{proposition}\label{prop.lelthm}
The following are $\mathbf{LEL}^\sassn$-theorems:

\begin{tabular}{ll}
$\mathtt{R}^\sassn$ & $[x \assn a] \phi \leftrightarrow [y \assn a] \phi[y/x]$ $\;$ ($y$ not in $\phi$)\\
$\mathtt{KPI}$ & $[\vec{x} \assn \vec{a}] (\K_{\vec{x}} \alpha \to \K_{\vec{x}} [\vec{x} \assn \vec{a}] \K_{\vec{x}} \alpha)$\\
$\mathtt{ANI}$ & $[x \assn a] (\neg p_x \to \K_{x} [x \assn a] \neg p_x)$\\
\end{tabular}
\end{proposition}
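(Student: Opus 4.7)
I plan to dispatch the three theorems in order of increasing difficulty, with $\mathtt{KPI}$ being the main obstacle.

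For $\mathtt{R}^\sassn$, the case $y = x$ is trivial, so assume $y \neq x$. Each direction of the biconditional follows from combining $\mathtt{TR}^\sassn$ and $\mathtt{SUB}^\sassn$ via $\mathtt{K}^\sassn$. For the forward direction: since $y$ is absent from $\phi$, $y \notin FV([x\assn a]\phi)$, so $\mathtt{TR}^\sassn$ gives $[x\assn a]\phi \to [y\assn a][x\assn a]\phi$; the $\mathtt{SUB}^\sassn$ instance $[y\assn a]([x\assn a]\phi \to \phi[y/x])$ (whose admissibility holds vacuously because there is no $[y\assn b]$-scope in $\phi$), combined with $\mathtt{K}^\sassn$, closes the chain. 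The converse is symmetric, exploiting that $(\phi[y/x])[x/y] = \phi$ and $x \notin FV(\phi[y/x])$ when $y$ is not in $\phi$; these ensure that both $\mathtt{TR}^\sassn$ and (the $x$/$y$-swapped instance of) $\mathtt{SUB}^\sassn$ remain applicable.

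For $\mathtt{ANI}$, I would chain the contrapositive of $\mathtt{API}$ with an instance of $\mathtt{KNI}$. From $\mathtt{API}$, propositional manipulation combined with $\mathtt{NEC}^\sassn$ and $\mathtt{K}^\sassn$ yields $[x\assn a](\neg\K_x[x\assn a]p_x \to \neg p_x)$; prefixing with $\K_x$ via $\mathtt{NEC}^\K + \mathtt{MONO}^\K$ and then distributing through the nested implication using $\mathtt{K}^\K$ together with $\mathtt{K}^\sassn$ produces $\K_x[x\assn a]\neg\K_x[x\assn a]p_x \to \K_x[x\assn a]\neg p_x$. Instantiating $\mathtt{KNI}$ with the sentence $\alpha = [x\assn a]p_x$ provides, inside $[x\assn a]$, $\neg\K_x[x\assn a]p_x \to \K_x[x\assn a]\neg\K_x[x\assn a]p_x$, so chaining gives $\neg\K_x[x\assn a]p_x \to \K_x[x\assn a]\neg p_x$ inside $[x\assn a]$. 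To complete the loop, $\mathtt{T}^\K$ composed with $\mathtt{SUB}^\sassn$ (taking $y = x$) gives $\K_x[x\assn a]p_x \to [x\assn a]p_x \to p_x$ inside $[x\assn a]$, whose contrapositive $\neg p_x \to \neg\K_x[x\assn a]p_x$ feeds into the previous chain; a final $\mathtt{NEC}^\sassn$ (and some $\mathtt{K}^\sassn$-bookkeeping) yields $\mathtt{ANI}$.

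The hard part is $\mathtt{KPI}$. My plan is to adapt the classical derivation of the $\mathbf{4}$-axiom from $\mathbf{T}$ and $\mathbf{5}$ to the guarded setting, with $\mathtt{T}^\K$ in the role of $\mathbf{T}$ and $\mathtt{KNI}$ (via contraposition) in the role of $\mathbf{5}$. The principal obstacle is the $[\vec{x}\assn\vec{a}]$ interspersed inside the $\K_{\vec{x}}$: the contrapositive of $\mathtt{KNI}$ gives $\DK_{\vec{x}}\lr{\vec{x}\assn\vec{a}}\K_{\vec{x}}\alpha \to \K_{\vec{x}}\alpha$ inside context, rather than the cleaner $\DK_{\vec{x}}\K_{\vec{x}}\alpha \to \K_{\vec{x}}\alpha$ that the classical chain would want, and the desired conclusion is $\K_{\vec{x}}[\vec{x}\assn\vec{a}]\K_{\vec{x}}\alpha$ rather than $\K_{\vec{x}}\K_{\vec{x}}\alpha$ (the latter not even being well-formed). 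I intend to handle this by establishing a collapse lemma: inside the $[\vec{x}\assn\vec{a}]$-context, and propagating through $\K_{\vec{x}}$ (at whose successor worlds $\vec{a}$ still exist, by $\mathtt{EPI}$), the formulas $\beta$, $[\vec{x}\assn\vec{a}]\beta$, and $\lr{\vec{x}\assn\vec{a}}\beta$ are interderivable. The direction $[\vec{x}\assn\vec{a}]\beta \to \beta$ comes from iterated $\mathtt{SUB}^\sassn$ with $y = x$; for sentences $\beta$ the direction $\beta \to [\vec{x}\assn\vec{a}]\beta$ follows from iterated $\mathtt{TR}^\sassn$ and $\mathtt{COM}^\sassn$, with the non-sentence case reduced to the sentence case via $\mathtt{R}^\sassn$ (already proved) by renaming $\vec{x}$ to fresh $\vec{y}$; and the passage $[\vec{x}\assn\vec{a}]\beta \leftrightarrow \lr{\vec{x}\assn\vec{a}}\beta$ uses $\mathtt{EPI}$, $\mathtt{DET}^\sassn$, and $\mathtt{T}^\K$. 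With the collapse lemma in hand, the classical $\mathbf{T}+\mathbf{5}\vdash\mathbf{4}$ argument proceeds in the standard pattern—$\gamma \to \DK_{\vec{x}}\gamma$ from $\mathtt{T}^\K$, followed by $\mathtt{KNI}$'s positive form, all applied to the sentence $\gamma = [\vec{x}\assn\vec{a}]\K_{\vec{x}}\alpha$—and yields $\mathtt{KPI}$.
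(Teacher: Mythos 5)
Your proposal is correct and follows essentially the same route as the paper: $\mathtt{R}^\sassn$ via $\mathtt{TR}^\sassn$, $\mathtt{SUB}^\sassn$ and $\mathtt{K}^\sassn$; $\mathtt{ANI}$ as the (dualized) combination of $\mathtt{API}$, $\mathtt{KNI}$, $\mathtt{T}^\K$ and the $y=x$ instance of $\mathtt{SUB}^\sassn$; and $\mathtt{KPI}$ as the guarded adaptation of the classical $\mathbf{T}+\mathbf{5}\vdash\mathbf{4}$ derivation, with $\mathtt{SUB}^\sassn$/$\mathtt{COM}^\sassn$ doing the re-assignment collapse. The only real deviation is in $\mathtt{KPI}$: the paper keeps the guarded sentence in diamond form $\lr{\vec{x}\assn\vec{a}}\K_{\vec{x}}\alpha$ (obtained by the $\mathtt{SUB}^\sassn$-contrapositive at the current world), so the contrapositive of $\mathtt{KNI}$ applies verbatim and $\mathtt{EPI}$ is never needed, whereas your choice of the box-form sentence $\gamma=[\vec{x}\assn\vec{a}]\K_{\vec{x}}\alpha$ forces the $\mathtt{EPI}$-based box-to-diamond conversion at successor worlds --- a workable but slightly longer detour.
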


Using techniques of first-order modal logic,
we can prove the following completeness theorem (again, the proof can be found in Appendix \ref{sec.appendix}):

\begin{theorem}\label{thm.comp}
Under first-order Kripke semantics, $\mathbf{LEL}^\sassn$ is sound and strongly complete w.r.t.\ the class of all (proper) local epistemic models.
\end{theorem}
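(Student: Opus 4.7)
(Proof proposal.) Soundness is routine: I would verify each axiom schema holds in every local epistemic model, with $\mathtt{T}^\K$ and $\mathtt{KNI}$ corresponding to (Local $S5$), $\mathtt{EPI}$ to (Individually Increasing Domain), $\mathtt{API}$ to (Local Predicates), and $\mathtt{ENI}$ to (Collectively Decreasing Domain). The $\mathtt{SUB}^\sassn$ axiom would require a small substitution lemma for the semantics; $\mathtt{COM}^\sassn$, $\mathtt{UI}^\sassn$, $\mathtt{TR}^\sassn$, and $\mathtt{DET}^\sassn$ all reflect direct features of the definition of admissible assignments and the semantic clause for $[x \assn a]$.

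For strong completeness, I would use a Henkin-style canonical model construction. Starting from a $\mathbf{LEL}^\sassn$-consistent set $\Gamma \subseteq \LL_0^\sassn$, I take as worlds the maximally consistent sets of $\LL_0^\sassn$-sentences, and fix a single variable $x_0$ for use in auxiliary definitions (the choice being harmless by $\mathtt{R}^\sassn$). Then I define $\M^c = (W^c, \delta^c, \{R_a^c\}_{a \in \mathbf{A}}, \rho^c)$ by
\begin{itemize}
\item $\delta^c(w) = \{a \in \mathbf{A} \mid \lr{x_0 \assn a}\top \in w\}$;
\item $R_a^c = \{(w,v) \mid \text{for every } \alpha \in \LL_0^\sassn,\ [x_0 \assn a]\K_{\{x_0\}} \alpha \in w \Rightarrow \alpha \in v\}$;
\item $\rho^c(p,w) = \{a \in \delta^c(w) \mid \lr{x_0 \assn a} p_{x_0} \in w\}$.
\end{itemize}
The heart of the argument is a truth lemma, which I would prove by induction on $\phi \in \LL^\sassn$ in the following form: for any MCS $w$ and any assignment $\sigma$ admissible for $\M^c, w$ and $\phi$, writing $\vec{x} = FV(\phi)$ and $\vec{a} = \sigma(\vec{x})$, one has $\M^c, w, \sigma \vDash \phi$ iff $[\vec{x} \assn \vec{a}]\phi \in w$. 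The Boolean and assignment cases use $\mathtt{DET}^\sassn$, $\mathtt{TR}^\sassn$, $\mathtt{SUB}^\sassn$, $\mathtt{COM}^\sassn$, $\mathtt{UI}^\sassn$, and the derived $\mathtt{R}^\sassn$, together with the usual Lindenbaum-style existence lemma for accessible MCSs.

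The main obstacle I expect is the $\K_X$ clause of the truth lemma, i.e.\ matching the canonical definition of individual $R_a^c$ with the intersection semantics of distributed knowledge. I would need to show that for any finite $A \subseteq \mathbf{A}$ and any $w \in W^c$, the set $\{\alpha \mid [\vec{x} \assn \vec{a}]\K_{\vec{x}} \alpha \in w\}$ precisely characterises membership in $\bigcap_{a \in A} R_a^c(w)$; the nontrivial direction relies on $\mathtt{MONO}^\K$ together with a standard ``mix'' construction for distributed knowledge, adapted so that the mix formulas are shuttled through the relevant assignment operators (here $\mathtt{KPI}$ from Proposition \ref{prop.lelthm} is essential, playing the role that positive introspection plays in ordinary $S5$ distributed-knowledge completeness). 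With the truth lemma in place, verifying the four local epistemic conditions reduces to purely axiomatic arguments: (Local $S5$) follows from $\mathtt{T}^\K$ and $\mathtt{KNI}$ (via $\mathtt{KPI}$); (Individually Increasing Domain) from $\mathtt{EPI}$; (Local Predicates) from $\mathtt{API}$ together with $\mathtt{ANI}$; and (Collectively Decreasing Domain) from $\mathtt{ENI}$. Finally, any consistent $\Gamma$ extends to an MCS $w^*$ with $\M^c, w^* \vDash \Gamma$, and if a \emph{proper} local epistemic model is required one applies the properization $(\M^c)^{pr}$ and invokes Proposition \ref{prop.pr}.
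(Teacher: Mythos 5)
There is a genuine gap at exactly the point you flag as "the main obstacle," and the fix you sketch would not work. If the worlds are MCSs of sentences and you define only individual canonical relations $R_a^c$ from the formulas $[x_0 \assn a]\K_{\{x_0\}}\alpha \in w$, then the claim that $\{\alpha \mid [\vec{x} \assn \vec{a}]\K_{\vec{x}}\alpha \in w\}$ characterises membership in $\bigcap_{a \in A} R_a^c(w)$ is precisely the statement that fails in naive canonical models for distributed knowledge: $\mathtt{MONO}^\K$ only gives $\K_{\{a\}}\alpha \to \K_A\alpha$, never the converse, so from $[\vec{x} \assn \vec{a}]\K_{\vec{x}}\alpha \in w$ you cannot conclude anything about a world $v$ that merely contains the individual $\K_{\{a\}}$-theories of $w$ for each $a \in A$; such a $v$ may lie in the intersection of the canonical individual relations while refuting $\alpha$. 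There is no "standard mix construction" for distributed knowledge analogous to the mix/induction axiom for common knowledge, and $\mathtt{KPI}$ (positive introspection) does not supply one; this is the well-known reason why completeness proofs for logics with distributed knowledge cannot read the intersection property off the canonical model and instead require model surgery.

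The paper resolves this differently: it keeps a primitive relation $R^c_A$ for each group $A$ in a canonical \emph{quasi}-model (so the truth lemma for $\K_X$ is immediate against $R^c_A$), and then performs a tree unraveling in which two histories are $R^{uc}_a$-related iff they share a prefix and every transition label after that prefix contains $a$; this forces $\bigcap_{a \in A} R^{uc}_a$ to refine $R^c_A$ (using $\mathtt{MONO}^\K$, Lemma \ref{lem.cano}), while the existence lemma still provides $A$-successors, and only then is the truth lemma proved on the unraveled model. Your proposal is missing this unraveling (or some equivalent pseudo-model-to-model transformation), and without it the $\K_X$ case of your truth lemma cannot be completed. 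A secondary remark: the paper also works in an extended language $\LL^\sassn_{\mathbf{A}}$ with agents as canonical names and with $FV$-maximal consistent sets paired with coherent assignments, using $\mathtt{UI}^\sassn$ in the Lindenbaum step to choose witnesses; your sentence-only MCSs with the $[\vec{x} \assn \vec{a}]$-translation might be made to work for the Boolean, atomic and assignment cases, but the distributed-knowledge step is the part that genuinely needs the additional construction, and it is the part your sketch leaves unsupported.
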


Then, with the help of the correspondence result we prove in Section \ref{sec.corr}, it follows that 

\begin{corollary}
Under simplicial complex semantics, $\mathbf{LEL}^\sassn$ is sound and strongly complete w.r.t.\ the class of all simplicial models.
\end{corollary}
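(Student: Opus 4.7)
The plan is to derive the corollary directly from Theorem \ref{thm.comp} by transferring along the translations $\mathtt{LEM}$ and $\mathtt{SC}$ of Section \ref{sec.corr}. Since both translations preserve the truth of every $\LL_0^\sassn$-sentence, and strong completeness is fundamentally a statement about a consistent set of sentences having a satisfying pointed model, the transfer is essentially mechanical in both directions.

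For soundness, I would fix any $\LL_0^\sassn$-sentence $\alpha$ derivable in $\mathbf{LEL}^\sassn$ from a set $\Gamma$ of sentences, and any simplicial model $\SC$ with facet $F$ such that $\SC, F \Vdash \Gamma$. Applying $\mathtt{LEM}$ produces a proper local epistemic model $\mathtt{LEM}(\SC)$ which, by the correspondence proposition in Section \ref{sec.corr}, satisfies exactly the same $\LL_0^\sassn$-sentences at $F$ as $\SC, F$; hence $\mathtt{LEM}(\SC), F \vDash \Gamma$. Soundness from Theorem \ref{thm.comp} then delivers $\mathtt{LEM}(\SC), F \vDash \alpha$, and the same preservation result pulls this back to $\SC, F \Vdash \alpha$.

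For strong completeness, I would argue contrapositively: given $\Gamma \not\vdash_{\mathbf{LEL}^\sassn} \alpha$, Theorem \ref{thm.comp} supplies a (proper) local epistemic model $\M, w$ with $\M, w \vDash \Gamma$ and $\M, w \not\vDash \alpha$. Applying $\mathtt{SC}$ and invoking the corresponding half of the preservation proposition at the distinguished facet $F^\M_w$ of $\mathtt{SC}(\M)$ gives $\mathtt{SC}(\M), F^\M_w \Vdash \Gamma$ while $\mathtt{SC}(\M), F^\M_w \not\Vdash \alpha$, witnessing $\Gamma \not\vDash \alpha$ over the class of simplicial models.

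There is no genuine obstacle at this stage: the bulk of the work has already been discharged in Theorem \ref{thm.comp} and in the sentence-preservation results of Section \ref{sec.corr}, and the only thing to verify is that those preservation results are indeed strong enough to relay both directions, which they are precisely because validity of $\LL_0^\sassn$-sentences does not depend on a choice of assignment.
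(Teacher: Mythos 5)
Your proposal is correct and matches the paper's intended argument: the corollary is obtained exactly by transferring Theorem \ref{thm.comp} along the sentence-preserving correspondences $\mathtt{LEM}$ and $\mathtt{SC}$ of Section \ref{sec.corr}, which is what you do. The paper leaves this transfer implicit ("with the help of the correspondence result\ldots it follows"), and your write-up simply spells out the same routine steps.
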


In the rest of this section, 
we show that there is a notion of \emph{normal form} for sentences in $\mathbf{LEL}^\sassn$:
every $\LL^\sassn$-sentence is equivalent to a sentence in which the assignment operators only appear right before atomic propositions, $\bot$ and distributed knowledgeusu operators.
(It should be noted that such a notion of normal form is \emph{not} used in the completeness proof presented in Appendix \ref{sec.appendix}.)

To put it more formally, we define the following notion of \emph{assignment normal form}:
$$\LL^\sassn_\mathtt{ANF} \ni \gamma :: = [x \assn a] p_x \mid [x \assn a] \bot \mid \top \mid \neg \gamma \mid (\gamma \wedge \gamma) \mid [\vec{x} \assn \vec{a}] \K_{\vec{x}} \gamma$$

Then, we have the following proposition (the proof can be found in Appendix \ref{sec.appendix}):

\begin{proposition}\label{prop.anf}
For all $\alpha \in \LL^\sassn_0$,
there is $\gamma \in \LL^\sassn_\mathtt{ANF}$ s.t.\ $\vdash \alpha \leftrightarrow \gamma$.
\end{proposition}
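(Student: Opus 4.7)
The plan is to prove, by induction on the total number of operators $\mu(\phi)$ in $\phi$ (counting $\neg$, $\wedge$, $[x \assn a]$, and $\K_X$ each as one), the following strengthened claim: for every $\phi \in \LL^\sassn$, every sequence of distinct variables $\vec{x} = x_1, \dots, x_n$ with $FV(\phi) \subseteq \{\vec{x}\}$, and every $\vec{a} \in \mathbf{A}^n$, there exists $\gamma \in \LL^\sassn_\mathtt{ANF}$ with $\vdash [\vec{x} \assn \vec{a}] \phi \leftrightarrow \gamma$. The proposition is the instance where $\vec{x}$ is the empty sequence.

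Before the induction I would derive four ``pushing'' schemata in $\mathbf{LEL}^\sassn$: (L1) $[x \assn a]$ distributes over $\wedge$ (from $\mathtt{K}^\sassn$ and $\mathtt{NEC}^\sassn$); (L2) $[x \assn a] \neg \phi \leftrightarrow [x \assn a] \bot \vee \neg [x \assn a] \phi$, where the $\Leftarrow$-direction uses $\mathtt{DET}^\sassn$ and the $\Rightarrow$-direction uses the $\mathtt{K}^\sassn$-consequence $[x \assn a] \phi \wedge [x \assn a] \neg \phi \to [x \assn a] \bot$; (L3) $[x \assn a]$ distributes over $\vee$ (from (L1) and (L2)); and (L4) $[y \assn b] \psi \leftrightarrow [y \assn b] \bot \vee \psi$ when $y \notin FV(\psi)$, by combining $\mathtt{TR}^\sassn$ with the same $\mathtt{K}^\sassn$-consequence used in (L2).

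The base cases are direct: $[\vec{x} \assn \vec{a}] \top \leftrightarrow \top$ by iterated $\mathtt{TR}^\sassn$; for $\phi = p_{x_k}$, use $\mathtt{COM}^\sassn$ to reorder the prefix so that $[x_k \assn a_k]$ sits innermost, then apply (L4) and (L3) to peel off the remaining assignments, yielding the ANF formula $\bigvee_{i \neq k} [x_i \assn a_i] \bot \vee [x_k \assn a_k] p_{x_k}$. In the inductive step I argue by cases on the outermost symbol of $\phi$: (i) $\phi = \neg \phi'$: iterate (L2) with (L3) to obtain $[\vec{x} \assn \vec{a}] \neg \phi' \leftrightarrow \bigvee_i [x_i \assn a_i] \bot \vee \neg [\vec{x} \assn \vec{a}] \phi'$, then apply the IH to $\phi'$; (ii) $\phi = \phi_1 \wedge \phi_2$: distribute via (L1) and apply the IH to each conjunct; (iii) $\phi = [y \assn b] \phi'$ with $y \notin \{\vec{x}\}$: here $[\vec{x} \assn \vec{a}][y \assn b] \phi'$ is literally $[(\vec{x},y) \assn (\vec{a},b)] \phi'$, so the IH on $\phi'$ with the extended distinct prefix applies directly; (iv) $\phi = [y \assn b] \phi'$ with $y \in \{\vec{x}\}$: use $\mathtt{R}^\sassn$ from Proposition~\ref{prop.lelthm} to rewrite $[y \assn b] \phi'$ as $[z \assn b] \phi'[z/y]$ for a fresh $z$, reducing to case (iii); (v) $\phi = \K_X \beta$: apply the IH to the sentence $\beta$ (empty prefix) to get an ANF $\gamma$, then use $\mathtt{COM}^\sassn$ to split the prefix as $\vec{z}, \vec{y}$ with $\vec{y}$ enumerating $X$ and $\vec{z}$ the variables outside $X$; since $\vec{z} \cap FV(\K_X \gamma) = \emptyset$, peel off $[\vec{z} \assn \vec{c}]$ via (L4) and (L3) as a disjunction of $[z_i \assn c_i] \bot$ terms, leaving $[\vec{y} \assn \vec{b}] \K_{\vec{y}} \gamma$, which is an ANF atom.

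The main obstacle is twofold. First, the pushing lemmas (L2) and (L4) are what allow $[x \assn a]$ to commute with Boolean connectives despite the presence of ``dead'' agents, and they rest on the subtle interplay of $\mathtt{DET}^\sassn$ and $\mathtt{TR}^\sassn$ with $\mathtt{K}^\sassn$ --- without (L2), one cannot push $[x \assn a]$ past $\neg$ at all. Second, case (iv) needs the renaming theorem $\mathtt{R}^\sassn$ to avoid variable capture, and the bookkeeping in case (v) is what guarantees the result literally matches the ANF grammar's $[\vec{x} \assn \vec{a}] \K_{\vec{x}} \gamma$ production with syntactically aligned variable indices.
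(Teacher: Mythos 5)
Your proof is correct, and it rests on exactly the same key equivalences as the paper: your (L1), (L2), (L4) are literally the paper's $\mathtt{ELM}^\wedge_a$, $\mathtt{ELM}^\neg_a$ and $\mathtt{ELM}^\mathtt{TR}_a$ (with $\mathtt{COM}^\sassn$ playing the role of $\mathtt{ELM}^\sassn_a$), and your derivations of them are the intended ones. Where you genuinely diverge is in the organization of the induction. The paper fixes a \emph{single} assignment operator $[x \assn a]$, defines a recursive translation $\mathtt{NF}^a_x$ that pushes just that operator inward until it sits only in front of $p_x$, $\bot$ or a block $[\vec{y} \assn \vec{b}]\K_Z\alpha$, and then iterates the translations over all assignment operators occurring in $\alpha$; the price is a preservation argument (each later translation must not destroy the normal forms already achieved for earlier operators), but no renaming is needed, since a repeated variable is automatically handled by the $x \notin FV(\chi)$ clause. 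You instead do one structural induction with the hypothesis strengthened to an arbitrary block prefix $[\vec{x} \assn \vec{a}]$ of distinct variables covering $FV(\phi)$; this avoids the iteration-and-preservation bookkeeping entirely and makes the $\K_X$ case land syntactically on the ANF production $[\vec{y} \assn \vec{b}]\K_{\vec{y}}\gamma$ in one step, but it forces you to invoke $\mathtt{R}^\sassn$ (your case (iv)) to keep the prefix variables distinct, which the paper's proof never needs. Both arguments are effective, so either supports the decidability remark following the proposition; your write-up leaves the routine replacement-of-equivalents steps (under $[x \assn a]$ via $\mathtt{K}^\sassn$/$\mathtt{NEC}^\sassn$, and under $\K_X$ via $\mathtt{NEC}^\K$, $\mathtt{MONO}^\K$, $\mathtt{K}^\K$) implicit, but that is the same level of detail as the paper's own sketch.
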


Note that $[x \assn a] p_x$ can simply be viewed as a propositional letter labeled by $a$,
$[\vec{x} \assn \vec{a}] \K_{\vec{x}} \gamma$ can be viewed as a distributed knowledge formula labeled by the set $\{\vec{a}\}$,
and $[x \assn a] \bot$ works as a \emph{non-existence} predicate for $a$, which can be expressed by $\K_a \bot$ in the two-valued approach (or more directly, by the negation of the global atom $\neg a$ employed in some works of the three-valued approach \cite{SCTV2023, SCBIS2024}).
In this way, $\LL^\sassn_\mathtt{ANF}$ can be viewed as an ordinary propositional modal language equipped with a special version of two-valued simplicial semantics,
in which $p_a$ is automatically \emph{true} when $a$ does \emph{not} exist.\footnote{
This is somewhat like the ``opposite'' of \emph{negative free logic}, in which an atomic formula is automatically \emph{false} when it mentions non-existing object(s).
But note that as long as we can express the existential predicate in our language,
in terms of expressivity,
there is no essential difference between them.
}

This connects our approach and the two-valued approach in a natural way: 
the two-valued language can be viewed as a fragment of our language,
which has the same expressivity as the latter.
Under this perspective, formulas like $\K_{\vec{a}} \alpha$ in the two-valued language become abbreviations for $[\vec{x} \assn \vec{a}] \K_{\vec{x}} \alpha$,
so for example, $\K_a \alpha$ actually says that ``\emph{if $a$ exists}, then the $a$ knows that $\alpha$'',
rather than simply ``$a$ knows that $\alpha$''.
Thus, some peculiarities of two-valued approach are no longer peculiar:
for example, for any $\phi$, we have $\K_a \phi$ when $a$ does not exist,
not because a dead agent ``knows'' the everything, but because $[x \assn a] \bot$ implies $[x \assn a] \K_x \phi$;
and the $\mathtt{T}$-axiom for $\K_a$ fails because $[x \assn a] \K_x \alpha \to \alpha$ is true only under the assumption $\lr{x \assn a} \top$, but not because knowledge does not imply truth.

It is also worth noting that the connection between our approach and the two-valued one gives us the \emph{decidability} of our logic.
The satisfiability problem for $\LL^{\sassn}_\mathtt{ANF}$ is decidable, since it can be viewed as a kind of two-valued language with an essentially $\mathbf{KB4}$-logic, which is clearly decidable;
moreover, computing the normal form of an $\LL^\sassn$-sentence is also a decidable matter (as one may check in the proof of Proposition \ref{prop.anf} in Appendix \ref{sec.appendix}).
Thus, we have

\begin{theorem}
The satisfiability problem for $\LL^\sassn$-sentences is decidable.
\end{theorem}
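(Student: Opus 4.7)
The plan is to chain together the two effective reductions already flagged in the paragraph preceding the statement. First, given an arbitrary $\LL^\sassn$-sentence $\alpha$, I would apply the constructive procedure underlying Proposition \ref{prop.anf} to compute an equivalent sentence $\gamma \in \LL^\sassn_\mathtt{ANF}$. The proof of that proposition (as I would spell out, essentially by induction on formula structure, pushing assignment operators inward through Boolean connectives using $\mathtt{K}^\sassn$, $\mathtt{DET}^\sassn$, $\mathtt{UI}^\sassn$, and $\mathtt{COM}^\sassn$, and rewriting modal subformulas via $\mathtt{R}^\sassn$) is plainly algorithmic, and terminates on a well-founded measure counting assignment operators not yet in front of an atom, $\bot$, or $\K$. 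By Proposition \ref{prop.anf} together with the soundness half of Theorem \ref{thm.comp}, $\alpha$ is satisfiable if and only if $\gamma$ is satisfiable, so it suffices to decide the satisfiability problem for $\LL^\sassn_\mathtt{ANF}$.

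Second, I would reduce satisfiability of $\LL^\sassn_\mathtt{ANF}$-sentences to satisfiability in an ordinary propositional multi-agent modal language with distributed knowledge. Following the reading suggested after Proposition \ref{prop.anf}, each primitive $[x \assn a] p_x$ is treated as an atomic proposition $p_a$ labelled by the agent $a$, each $[x \assn a] \bot$ as an atomic ``non-existence'' proposition $\mathsf{nex}_a$, and each $[\vec{x} \assn \vec{a}] \K_{\vec{x}} \gamma$ as a distributed-knowledge modality indexed by $\{\vec{a}\}$ applied to the translation of $\gamma$. The axioms $\mathtt{T}^\K$, $\mathtt{KNI}$, $\mathtt{EPI}$, $\mathtt{API}$, $\mathtt{ENI}$ translate into a finite set of schemata in this target language that amount to a distributed-knowledge version of a $\mathbf{KB4}$-style multi-modal logic augmented with bookkeeping clauses for $\mathsf{nex}_a$. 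Satisfiability for this target logic has a finite model property, obtained by a standard filtration through the (finitely many) subformulas of the input, and is therefore decidable.

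The main obstacle is making the second reduction watertight: I need to show that any Kripke model of the translated formula plus translated schemata can be unpacked into a local epistemic model of the original $\gamma$, and conversely. The delicate spots are precisely the interactions forced by $\mathtt{EPI}$ and $\mathtt{ENI}$: the $\mathsf{nex}_a$ predicate has to behave coherently with respect to the distributed-knowledge modalities (non-existence of $a$ must be known by every surviving group), and the local-$S5$ constraint must be enforced only on the ``$a$-alive'' fragment of each modality. I expect the cleanest route is to verify the translation semantically by invoking the correspondence of Section \ref{sec.corr}: a local epistemic model (equivalently, a simplicial model after $\mathtt{SC}$) directly yields a Kripke model for the translated formula, while the converse direction is obtained by filtration and then reading off a local epistemic structure in which worlds are $\bigcap$-cells of surviving agents. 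Combining this filtration-based decision procedure with the effective assignment-normal-form rewriting from the first step yields a decision procedure for the full $\LL^\sassn$-satisfiability problem.
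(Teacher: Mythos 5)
Your proposal is essentially the paper's own argument: effectively compute the assignment normal form (Proposition \ref{prop.anf}, whose translation $\mathtt{NF}^a_x$ is plainly algorithmic), then observe that $\LL^\sassn_\mathtt{ANF}$ can be read as an ordinary propositional language with agent-labelled atoms, a non-existence atom, and distributed-knowledge modalities, whose satisfiability problem is decidable. The only difference is in how the second step is discharged: where you propose to prove decidability of the target logic yourself by filtration (and correctly flag the $\mathtt{EPI}$/$\mathtt{ENI}$ bookkeeping as the delicate part), the paper simply appeals to the known decidability of the two-valued, essentially $\mathbf{KB4}$-based simplicial logic, so your filtration work is an optional, more self-contained route rather than a divergence.
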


\section{Intensional Distributed Knowledge}\label{sec.ik}
Since we have open formulas in our language,
it is natural to define groups of agents with them.
Formally, for a formula $\phi(x)$ with at most one free variable $x$ (let $\LL^\sassn_x$ denote the set of all such formulas), the group defined by $\phi(x)$ at a pointed simplicial model $\SC, F$ is $\phi(\SC, F) = \{a \in \chi^\SC[F] \mid \SC, F, \sigma[x \mapsto a] \Vdash \phi(x)\}$,
while the group it defines at a pointed first-order Kripke model $\M,w$ is $\phi(\M,w) = \{a \in \delta^\M(w) \mid \M,w,\sigma[x \mapsto a] \vDash \phi(x)\}$ (our choice of $\sigma$ does not matter here).
Note that such groups are \emph{intensional},
in the sense that the same formula $\phi(x)$ might define different groups at different facets / worlds.\footnote{
The idea of introducing intensional groups to epistemic logic dates back to \cite{10.1093/logcom/3.4.345,Grove1995NamingAI},
and intensional distributed knowledge has been studied in \cite{B_lkov__2021,B_lkov__2023}, for example. Unlike these works, however, the intensional groups here are denoted by \emph{open formulas} in our language, rather than a distinct set of \emph{group names}.
}

Then, it seems interesting to consider a kind of \emph{intensional distributed knowledge} based on such intensional groups.
Formally, for each $\phi \in \LL^\sassn_x$, the intended truth conditions for $\K_{\phi} \alpha$ are the following:

\begin{center}
{\small
\begin{tabular}{|lcl|lcl|}
\hline
\\[-1em]
$\SC, F \Vdash \K_{\phi} \alpha$ & iff & for all $G \in \F^\SC(C)$ s.t. & 
$\M,w \vDash \K_{\phi} \alpha$ & iff & for all $v \in W^\M$ s.t. \\
& & $\phi(\SC, F) \subseteq \chi^\SC[F \cap G]$, &
& & $v \in \bigcap_{a \in \phi(\M,w)} R_a^\M(w)$,\\
& & $\SC, G \Vdash \alpha$ & 
& & $\M,v \vDash \alpha$\\
\hline
\end{tabular}
}
\end{center}
In our language, we can define $\K_{\phi}$ as follows:
first, for each $\phi \in \LL^\sassn_x$,
let $\phi!(A) = \bigwedge_{a \in A} \lr{x \assn a} \phi \wedge \bigwedge_{b \in \mathbf{A} \setminus A} [x \assn b] \neg \phi$.
Intuitively, $\phi!(A)$ says that $A$ is exactly the group defined by $\phi$ at the present world.
Then, $\K_\phi \alpha$ can be defined as follows:
$$\K_\phi \alpha = \bigwedge_{\{\vec{a}\} \subseteq \mathbf{A}} (\phi!(\{\vec{a}\}) \to [\vec{x} \assn \vec{a}] \K_{\vec{x}} \alpha)$$
It is not hard to check that the truth conditions of $\K_\phi \alpha$ are exactly the ones above.

The logical behavior of $\K_\phi$ is quite interesting.
The logic of $\K_\phi$ is at least a normal modal logic with the $\mathtt{T}$-axiom $\K_\phi \alpha \to \alpha$,
but both $\K_\phi \alpha \to \K_\phi \K_\phi \alpha$ and $\neg \K_\phi \alpha \to \K_\phi \neg \K_\phi \alpha$ are \emph{invalid} in general,
even though for any rigid group $A = \{\vec{a}\}$,
if we define $\K_A \alpha$ as $[\vec{x} \assn \vec{a}]\K_{\vec{x}} \alpha$,
then both $\K_A \alpha \to \K_A \K_A \alpha$ and $\neg \K_A \alpha \to \K_A \neg \K_A \alpha$ are valid (by $\mathtt{KPI}$, $\mathtt{KNI}$ and $\mathtt{EPI}$): 
essentially, this is because an intensional group might not know its \emph{extension}.
Nevertheless, \emph{positive introspection} is still relatively easy to achieve for intensional groups:
the following lemma tells us that when $\phi(x)$ expresses a property that can be positively introspected by the agent possessing it, 
$\K_\phi$ satisfies the principle of positive introspection:

\begin{lemma}\label{lem.pi}
For any $\phi \in \LL_x^\sassn$,
if $\vdash [x \assn a] (\phi \to \K_x [x \assn a] \phi)$ for all $a \in \mathbf{A}$,
then $\vdash \K_\phi \alpha \to \K_\phi \K_\phi \alpha$.
\end{lemma}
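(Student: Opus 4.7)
The cleanest route is semantic, invoking Theorem \ref{thm.comp} at the end. I would fix an arbitrary local epistemic model $\M = (W, \delta, \{R_a\}_{a \in \mathbf{A}}, \rho)$ and a world $w \in W$, assume $\M, w \vDash \K_\phi \alpha$, and prove $\M, w \vDash \K_\phi \K_\phi \alpha$. Write $A = \phi(\M, w)$, pick an arbitrary $v \in \bigcap_{a \in A} R_a(w)$, set $A' = \phi(\M, v)$, and pick an arbitrary $v' \in \bigcap_{a \in A'} R_a(v)$; unfolding the semantics of $\K_\phi$ twice, the goal reduces to $\M, v' \vDash \alpha$.

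The crucial step is to establish $A \subseteq A'$; this is where the positive-introspection hypothesis on $\phi$ enters. For each $a \in A$, soundness of $[x \assn a](\phi \to \K_x [x \assn a] \phi)$ translates into the following statement: whenever $\sigma(x) = a$ with $a \in \delta(u)$ and $\M, u, \sigma \vDash \phi$, also $\M, u', \sigma \vDash \phi$ at every $u' \in R_a(u)$ (the latter is admissible for $\sigma$ by (Individually Increasing Domain)). Applying this with $u = w$, $u' = v$, and using $a \in \phi(\M, w)$ together with $v \in R_a(w)$, I deduce $a \in \phi(\M, v) = A'$. Ranging over $a \in A$ gives $A \subseteq A'$.

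With $A \subseteq A'$ in hand, $v' \in \bigcap_{a \in A'} R_a(v) \subseteq \bigcap_{a \in A} R_a(v)$. By (Local $S5$) each $R_a$ is transitive on its domain, and (Individually Increasing Domain) propagates $a \in \delta(w)$ to $a \in \delta(v), \delta(v')$; combining $w R_a v$ with $v R_a v'$ then yields $w R_a v'$ for each $a \in A$, i.e., $v' \in \bigcap_{a \in A} R_a(w)$. The initial assumption $\M, w \vDash \K_\phi \alpha$ delivers $\M, v' \vDash \alpha$, closing the semantic argument; Theorem \ref{thm.comp} converts this into the desired $\mathbf{LEL}^\sassn$-derivability. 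The main obstacle is bookkeeping around admissibility of the assignment $\sigma[x \mapsto a]$ both at $w$ and at the successor worlds, and checking that the $R_a$-transitivity step never escapes the local domain of $a$; both hinge squarely on (Individually Increasing Domain) and (Local $S5$), but have to be spelled out carefully to make the positive-introspection step bite.
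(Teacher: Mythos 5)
Your proof is correct, but it takes a genuinely different route from the paper. The paper argues purely syntactically inside $\mathbf{LEL}^\sassn$: from the hypothesis it derives $\vdash \phi!(\{\vec{a}\}) \to [\vec{x} \assn \vec{a}] \K_{\vec{x}} \bigvee_{\{\vec{a}\} \subseteq B \subseteq \mathbf{A}} \phi!(B)$ (the group defined by $\phi$ can only grow along the group's accessibility), combines this with the derivable positive-introspection principle $[\vec{x} \assn \vec{a}] \K_{\vec{x}} \alpha \to [\vec{x} \assn \vec{a}] \K_{\vec{x}} [\vec{x} \assn \vec{a}] \K_{\vec{x}} \alpha$ and monotonicity in the group, and then unfolds the definition of $\K_\phi$ as a big conjunction over $\phi!(\{\vec{a}\})$. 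You instead work model-theoretically over local epistemic models and transfer back and forth via Theorem \ref{thm.comp}: soundness turns the hypothesis into preservation of $\phi$ along each member's relation, (Individually Increasing Domain) handles admissibility of $\sigma[x \mapsto a]$ at successors, and your key step $\phi(\M,w) \subseteq \phi(\M,v)$ is exactly the semantic counterpart of the paper's $\phi!$-growth theorem; the (Local $S5$) transitivity step then closes the argument, and the $\phi(\M,w) = \emptyset$ case is handled by the $\K_\emptyset$-as-universal-modality convention. Your route is arguably more transparent about why the hypothesis suffices, but it is heavier in dependencies: it needs the soundness-and-completeness theorem (no circularity, since that theorem never uses this lemma) and also leans on the paper's stated-but-unproved claim that the defined formula $\K_\phi \alpha$ realizes the intended truth conditions, whereas the paper's derivation is self-contained in the proof system and exhibits exactly which axioms ($\mathtt{KPI}$-style reasoning, $\mathtt{KNI}$, $\mathtt{EPI}$, monotonicity) do the work.
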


\begin{proof}
(Sketch) If $\vdash [x \assn a] (\phi \to \K_x [x \assn a] \phi)$ for all $a \in \mathbf{A}$,
then for any $\{\vec{a}\} \subseteq \mathbf{A}$,
we can check $\vdash \phi!(\{\vec{a}\}) \to [\vec{x} \assn \vec{a}] \K_{\vec{x}} \bigvee_{\{\vec{a}\} \subseteq B \subseteq \mathbf{A}} \phi!(B)$.
Moreover, $\vdash [\vec{x} \assn \vec{a}] \K_{\vec{x}} \alpha \to [\vec{x} \assn \vec{a}] \K_{\vec{x}} [\vec{x} \assn \vec{a}] \K_{\vec{x}} \alpha$,
and for any $\{\vec{a}\} \subseteq \{\vec{b}\} \subseteq \mathbf{A}$,
$\vdash [\vec{x} \assn \vec{a}] \K_{\vec{x}} \alpha \to [\vec{y} \assn \vec{b}] \K_{\vec{y}} \alpha$.
Thus, $\vdash \K_\phi \alpha \to (\phi!(\{\vec{a}\}) \to [\vec{x} \assn \vec{a}] \K_{\vec{x}} \K_\phi \alpha)$ for all $\{\vec{a}\} \subseteq \mathbf{A}$.
Then, by definition, $\vdash \K_\phi \alpha \to \K_\phi \K_\phi \alpha$.
\end{proof}

Then, we can show that for a natural kind of intensional groups, the corresponding notion of intensional distributed knowledge indeed satisfies positive introspection:

\begin{proposition}
Fix a variable $x$,
and let $\chi$ be a formula generated by the following scheme:\footnote{
It is interesting to note the truth value of a formula $\chi$ generated this way can be defined on \emph{vertices} instead of \emph{facets}:
notice that for any $\SC = (\mathcal{V}, C, \chi, \ell)$ and $v \in \mathcal{V}$, $F, G \in \F(C)$, if $v \in F \cap G$ and $\sigma(x) = \chi(v)$,
then $\SC, F, \sigma \Vdash \chi$ iff $\SC, G, \sigma \Vdash \chi$;
thus, we are justified to simply write $\SC, v \Vdash \chi$ instead.
}
\begin{center}
$\chi :: = p_x \mid \K_x \alpha \mid \neg \chi \mid (\chi \wedge \chi)$\\[0.5em]
where $\alpha \in \LL_0^\sassn$
\end{center}
Then, for any $\beta \in \LL_0^\sassn$, $\vdash \K_\chi \beta \to \K_\chi \K_\chi \beta$.
\end{proposition}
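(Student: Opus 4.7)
The plan is to reduce the claim to Lemma \ref{lem.pi} and then prove the hypothesis of that lemma by induction on the structure of $\chi$. By Lemma \ref{lem.pi}, it suffices to establish, for every $a \in \mathbf{A}$, the \emph{positive introspection schema}
\[
\vdash [x \assn a](\chi \to \K_x [x \assn a]\chi). \tag{P}
\]
Attempting a direct induction on $\chi$ fails at the case $\chi = \neg \chi'$, since reducing (P) for $\neg \chi'$ requires knowing that $\chi'$ is \emph{negatively} introspected. The right move is therefore to strengthen the induction hypothesis: I will prove simultaneously that, for each $\chi$ in the grammar and each $a \in \mathbf{A}$, both (P) and the negative variant
\[
\vdash [x \assn a](\neg \chi \to \K_x [x \assn a]\neg \chi) \tag{N}
\]
are derivable in $\mathbf{LEL}^\sassn$.

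The two atomic base cases are handled directly by axioms. For $\chi = p_x$, (P) is $\mathtt{API}$ and (N) is $\mathtt{ANI}$ (Proposition \ref{prop.lelthm}). For $\chi = \K_x \alpha$ with $\alpha \in \LL_0^\sassn$, (P) is an instance of $\mathtt{KPI}$ and (N) is an instance of $\mathtt{KNI}$, both with singleton strings $\vec{x} = x$ and $\vec{a} = a$. For $\chi = \neg \chi'$ the two parts simply swap: (P) for $\neg \chi'$ is literally (N) for $\chi'$ (from the IH), and (N) for $\neg \chi'$ follows from (P) for $\chi'$ after noting that, since $[x \assn a]$ is a normal modal operator, $\vdash [x \assn a] \neg \neg \chi' \leftrightarrow [x \assn a] \chi'$.

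The conjunction case $\chi = \chi_1 \wedge \chi_2$ is where the normal-modal machinery of $\K_x$ and $[x \assn a]$ gets used. For (P), I conjoin the IH conclusions inside $[x \assn a]$, obtaining $\K_x[x \assn a]\chi_1 \wedge \K_x[x \assn a]\chi_2$; then the $\mathtt{K}^\K$ axiom yields $\K_x([x \assn a]\chi_1 \wedge [x \assn a]\chi_2)$, and the fact that $[x \assn a]$ distributes over conjunction (a consequence of $\mathtt{K}^\sassn$ and $\mathtt{NEC}^\sassn$) delivers $\K_x[x \assn a](\chi_1 \wedge \chi_2)$. For (N), I use $\neg(\chi_1 \wedge \chi_2) \leftrightarrow \neg \chi_1 \vee \neg \chi_2$ and the monotonicity of $[x \assn a]$ and $\K_x$: from either IH disjunct I can weaken $[x \assn a]\neg \chi_i$ to $[x \assn a]\neg(\chi_1 \wedge \chi_2)$ and then apply $\K_x$-monotonicity. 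Combining the disjuncts gives (N). Finally, applying Lemma \ref{lem.pi} to (P) yields $\vdash \K_\chi \beta \to \K_\chi \K_\chi \beta$ for every $\beta \in \LL_0^\sassn$.

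The main conceptual obstacle is recognising that one must strengthen the induction to track both (P) and (N); once that is seen, the argument is essentially bookkeeping with the normal-modal axioms $\mathtt{K}^\K$ and $\mathtt{K}^\sassn$ together with $\mathtt{API}$, $\mathtt{ANI}$, $\mathtt{KPI}$, and $\mathtt{KNI}$. A minor check that I would do carefully is the equivalence $\vdash [x \assn a]\neg \neg \chi' \leftrightarrow [x \assn a]\chi'$, which relies only on propositional reasoning transported under $[x \assn a]$ via $\mathtt{NEC}^\sassn$ and $\mathtt{K}^\sassn$.
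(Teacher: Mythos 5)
Your proof is correct and follows essentially the same route as the paper: both reduce the claim to Lemma \ref{lem.pi} and then discharge the hypothesis $\vdash [x \assn a](\chi \to \K_x [x \assn a]\chi)$ using $\mathtt{API}$, $\mathtt{ANI}$, $\mathtt{KPI}$ and $\mathtt{KNI}$ for the four kinds of literals. The only difference is bookkeeping: the paper rewrites $\chi$ into disjunctive normal form so that negations sit only on literals, whereas you run a simultaneous induction on the positive and negative introspection schemata, which makes the closure under the Boolean connectives (handled only implicitly in the paper's sketch) fully explicit.
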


\begin{proof}
(Sketch) By Lemma \ref{lem.pi}, it suffices to check that for any such $\chi$, $\vdash [x \assn a] (\chi \to \K_x [x \assn a] \chi)$ for any $a \in \mathbf{A}$.
In order to do so, notice that every such $\chi$ can be written in \emph{disjunctive normal form},
and when $\chi$ is $p_x$, $\neg p_x$, $\K_x \alpha$ or $\neg \K_x \alpha$,
by $\mathtt{API}$, $\mathtt{ANI}$, $\mathtt{KPI}$ and $\mathtt{KNI}$, $\vdash [x \assn a] (\chi \to \K_x [x \assn a] \chi)$ for any $a \in \mathbf{A}$.
\end{proof}

On the other hand, \emph{negative introspection} is much more difficult to achieve.
Some rather boring intensional groups indeed have distributed knowledge with negative introspection,
e.g.\ the group defined by $\bot$ and the group defined by $\top$.
The former is too weak, since $\K_\bot \alpha$ is just $\K_\emptyset \alpha$;
and the latter is too strong:
$\K_\top \alpha$ expresses the distributed knowledge of \emph{all existing agents}, so $\alpha \to \K_\top \alpha$ is actually valid for any sentence $\alpha$.\footnote{
This is easy to check by the semantics, especially by the simplicial complex semantics.
However, deducing $\alpha \to \K_\top \alpha$ in $\mathbf{LEL}^\sassn$ is not easy.
We need to make use of the fact that every formula in $\LL^\sassn$ has an assignment normal form,
and then show that for each $\gamma$ in normal form,
we indeed have $\vdash \gamma \to \K_\top \gamma$.
}
More generally, for any sentence $\alpha$, the distributed knowledge of the intensional group defined by $\K_\emptyset \alpha$ also have negative introspection,
since in each model,
it either always defines the empty group,
or always defines the group of all existing agents.
Unfortunately, we have not yet found any interesting $\phi \in \LL_x^\sassn$ that validates $\neg \K_\phi \alpha \to \K_\phi \neg \K_\phi \alpha$.

\section{Conclusion}\label{sec.con}
In this paper, we introduced a term-modal language with assignment operators to study epistemic logic on impure simplicial complexes.
We characterized the expressivity of the simplicial semantics and first-order Kripke semantics of this language,
showed the correspondence between simplicial complexes and local epistemic models,
offered a complete axiomatization for the epistemic logic based on our language, 
showed that the our language has assignment normal form,
and also studied the behavior of the intensional distributed knowledge operators, which can be defined in our language.

There are a number of interesting directions for future work.
For example, it is interesting to further study the connection between our simplicial semantics and the existing approaches,
especially the \emph{three-valued} one.
Though sentences in our language are two-valued,
there might be some natural way to generate a three-valued semantics from our semantics.
If this can be done,
then it should also be interesting to see whether such a three-valued semantics coincides with the existing ones.

We can also use our language to do epistemic logic on more general kinds of structures.
On the one hand, properties of local epistemic models clearly can be relaxed;
on the other hand, it is also interesting to consider
\emph{face-semantics} rather than \emph{facet-semantics}, like the one studied in \cite{SCTV2023},
or more general simplicial structures based on \emph{semi-simplicial sets}, like the one studied in \cite{CACHIN2025114902}.

The notion of intensional distributed knowledge also seems interesting.
Our results concerning such a notion of knowledge presented in this paper are rather primitive, 
and understanding the behavior of such intensional distributed knowledge operators might help us better understand the structure of the definable groups in such models.

Finally, it should also be interesting to generalize the kind of assignment operators we introduced in this paper to study varying-domain first-order modal logic in general.
This might provide us with an interesting first-order modal logic alternative to free logic-based varying-domain first-order modal logic.

\section*{Acknowledgement}
I would like to thank Yanjing Wang and Hans van Ditmarsch for their advice on the early versions of this paper.
I would also like to thank three anonymous reviewers from TARK 2025 for their insightful comments on this paper.

\bibliographystyle{eptcs}
\bibliography{generic}

\appendix

\section{Appendix}\label{sec.appendix}

\subsection{Proof of Theorem \ref{thm.hmp}}
\begin{proof}
We only consider the case for simplicial models.

Let $\SC$ and $\SD$ be arbitrary saturated simplicial models,
and let $Z = \{(F, G) \in \F(S^\SC) \times \F(C^\SD) \mid \SC, F \equiv_{\LL^\sassn} \SD, G\}$.
We show that $Z$ is a bisimulation between $\SC$ and $\SD$:
Let $(F, G) \in Z$ be arbitrary.
We only check that (Inv) and (Zig) hold.
 
For (Inv): Let $A = \chi^\SC[F]$,
and consider the sentence $\alpha_A \assn \bigwedge_{a \in A} \lr{x \assn a} \top \wedge \bigwedge_{b \in \mathbf{A} \setminus A} [x \assn b] \bot$.
Clearly $\SC, F \Vdash \alpha_A$, so $\SD, G \Vdash \alpha_A$,
and thus $\chi^\SD[G] = A = \chi^\SC[F]$.
Then, for any $p \in \mathbf{P}$, let $A_p = \chi^\SC[F \cap \ell^{\SC}(p)]$,
and consider the sentence $\alpha_{A_p} \assn \bigwedge_{a \in A_p} \lr{x \assn a} p_x \wedge \bigwedge_{b \in \mathbf{A} \setminus A_p} [x \assn b] \neg p_x$.
Again, $\SC, F \Vdash \alpha_{A_p}$, so $\SD, G \Vdash \alpha_{A_p}$,
and thus $\chi^\SD[G \cap \ell^\SD(p)] = A_p = \chi^\SC[F \cap \ell^\SC(p)]$.

For (Zig): Suppose (towards a contradiction)
that there are $A \subseteq \mathbf{A}$, $F' \in \F(C^\SC)$
s.t.\ $A \subseteq \chi^\SC[F \cap F']$,
but for any $G' \in \F(C^\SD)$ s.t.\ $A \subseteq \chi^\SD[G \cap G']$, $(F', G') \notin Z$.
Then, by the definition of $Z$, for every $G' \in \F(C^\SD)$ s.t.\ $A \subseteq \chi^\SD[G \cap G']$,
there is a sentence $\alpha_{G'}$ s.t.\ 
$\SC, F' \Vdash \alpha_{G'}$ but 
$\SD, G' \not \Vdash \alpha_{G'}$.
Then, since $\SD$ is saturated, there is some sentence $\beta$
s.t.\ $\SC, F' \Vdash \beta$ but $\SD, G' \not \Vdash \alpha$
for all $G' \in \F(C^\SD)$ s.t.\ $A \subseteq \chi^\SD[G \cap G']$.
Assume that $A = \{a_1, ..., a_n\}$,
and let $X = \{x_1, ..., x_n\}$.
Then, $\SD, G \not \Vdash \lr{x_1 \assn a_1} \cdots \lr{x_n \assn a_n} \DK_X \beta$ but $\SC, F \Vdash \lr{x_1 \assn a_1} \cdots \lr{x_n \assn a_n} \DK_X \beta$, causing a contradiction.
\end{proof}

\subsection{Proof of Proposition \ref{prop.lelthm}}
\begin{proof}
$\mathtt{R}^\sassn$:
Assume that $y$ is not in $\phi$.
By $\mathtt{SUB}^\sassn$,
$\vdash [y \assn a] ([x \assn \phi] \phi \to \phi[y/x])$.
Then, by $\mathtt{K}^\sassn$,
$\vdash [y \assn a] [x \assn a] \phi \to [y \assn a] \phi[y/x]$.
Finally, by $\mathtt{TR}^\sassn$,
$\vdash [x \assn a] \phi \to [y \assn a][x \assn a] \phi$,
so $\vdash [x \assn a] \phi \to [y \assn a] \phi[y/x]$.
The converse direction is similar.

$\mathtt{KPI}$:
First, by $\mathtt{SUB}^\sassn$ and $\mathtt{COM}^\sassn$, we have
$\vdash [\vec{x} \assn \vec{a}] (\K_{\vec{x}} \alpha \to \lr{\vec{x} \assn \vec{a}} \K_{\vec{x}} \alpha)$.
Then, by $\mathtt{T}^\K$, we have
$\vdash [\vec{x} \assn \vec{a}] (\K_{\vec{x}} \alpha \to \DK_{\vec{x}} \lr{\vec{x} \assn \vec{a}} \K_{\vec{x}} \alpha)$.
Hence, by $\mathtt{KNI}$,
$\vdash [\vec{x} \assn \vec{a}] (\K_{\vec{x}} \alpha \to \K_{\vec{x}} [\vec{x} \assn \vec{a}] \DK_{\vec{x}} \lr{\vec{x} \assn \vec{a}} \K_{\vec{x}} \alpha)$.
Finally, by $\mathtt{KNI}$ and $\mathtt{K}^\sassn$, 
$\vdash [\vec{x} \assn \vec{a}] \DK_{\vec{x}} \lr{\vec{x} \assn \vec{a}} \K_{\vec{x}} \alpha \to [\vec{x}\assn\vec{a}] \K_{\vec{x}} \alpha$, so
$\vdash [\vec{x} \assn \vec{a}] (\K_{\vec{x}} \alpha \to \K_{\vec{x}} [\vec{x} \assn \vec{a}] \K_{\vec{x}} \alpha)$.

$\mathtt{ANI}$:
By $\mathtt{API}$ and $\mathtt{K}^{\sassn}$, 
$\vdash \lr{x \assn a} p_x \to \lr{x \assn a} \K_{x} p_x$,
so $\vdash [x \assn a] (\DK_{x} \lr{x \assn a} p_x \to \DK_{x} \lr{x \assn a} \K_{x} p_x)$.
By $\mathtt{KNI}$ and $\mathtt{T}^\K$, $\vdash [x \assn a] (\DK_{x} \lr{x \assn a} \K_{x} p_x \to p_x)$.
Thus, we have $\vdash [x \assn a](\DK_{x} \lr{x \assn a} p_x \to p_x)$,
so $\vdash [x \assn a](\neg p_x \to \K_{x} [x \assn a] \neg p_x)$ by contraposition.
\end{proof}

\subsection{Proof of Theorem \ref{thm.comp}}
Now, we offer a proof for Theorem \ref{thm.comp}.
The strategy is to first construct a kind of canonical \emph{quasi} model, and then use tree-unraveling to turn it into a real model.

As preparation, we first add all agents to our language as new variables:
let $\LL_\mathbf{A}^\sassn$ be the language based on the variable set $\mathbf{X} \cup \mathbf{A}$.
In the following constructions, the agents will work as the canonical names of themselves.
Then, in order to construct the canonical model,
we introduce following notions:

\begin{definition}
A set $\Delta$ of $\LL^{\sassn}_\mathbf{A}$-formulas is an \emph{$FV$-maximal consistent set} (\emph{$FV$-MCS} for short),
if $\Delta$ is consistent, and for all $\phi \in \LL^{\sassn}_\mathbf{A}$ s.t.\ $FV(\phi) \subseteq FV(\Delta)$,
either $\phi \in \Delta$ or $\neg \phi \in \Delta$.

For an $FV$-MCS $\Delta$ of $\LL^{\sassn}_\mathbf{A}$-formulas and an assignment $\sigma: FV(\Delta) \to \mathbf{A}$,
we say $(\Delta, \sigma)$ is \emph{coherent},
if 
(i) for all $a \in \mathbf{A} \cap FV(\Delta)$, $\sigma(a) = a$,
(ii) for all $x \in FV(\Delta)$, $\lr{x \assn \sigma(x)} \Delta$ is consistent,\footnote{
By ``$\lr{x \assn a} \Delta$ is consistent'',
we mean that for any finite $\Delta_0 \subseteq \Delta$,
$\lr{x \assn a} \bigwedge \Delta_0$ is consistent.
} and (iii) if $\lr{x \assn a} \top \in \Delta$, then $a \in FV(\Delta)$.
\end{definition}

We can show that a coherent pair $(\Delta, \sigma)$ has the following good properties:

\begin{lemma}\label{lem.coh}
For any $FV$-MCS $\Delta$ and assignment $\sigma: FV(\Delta) \to \mathbf{A}$ s.t.\ $(\Delta, \sigma)$ is coherent, 
\begin{itemize}
\item[(i)] If $\phi \in \Delta$ and $x \in FV(\Delta)$, then $\lr{x \assn \sigma(x)} \phi \in \Delta$;
\item[(ii)] If $[x \assn a] \phi \in \Delta$, $a \in FV(\Delta)$ and $\phi[a/x]$ is admissible, then $\phi[a/x] \in \Delta$.
\end{itemize}
\end{lemma}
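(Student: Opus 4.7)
The plan is to prove (i) first, and then use it together with an instance of $\mathtt{SUB}^\sassn$ to establish (ii).

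For (i): since $\phi \in \Delta$ we have $FV(\phi) \subseteq FV(\Delta)$, hence $FV(\lr{x \assn \sigma(x)} \phi) = FV(\phi) \setminus \{x\} \subseteq FV(\Delta)$. By $FV$-maximality, either $\lr{x \assn \sigma(x)} \phi \in \Delta$ (which is what we want) or $[x \assn \sigma(x)] \neg \phi \in \Delta$. I would rule out the second case by showing that it forces $\lr{x \assn \sigma(x)} \Delta$ to be inconsistent, contradicting coherence condition (ii). The crucial observation is that $\mathtt{SUB}^\sassn$ with $y = x$ is a legal instance (the admissibility side-condition is automatic, since $\phi[x/x] = \phi$ and no free $x$ can lie within a $[x \assn b]$ scope), yielding the ``diagonal'' theorem $[x \assn \sigma(x)]([x \assn \sigma(x)] \neg \phi \to \neg \phi)$. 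Two uses of $\mathtt{DET}^\sassn$ (on $\lr{x \assn \sigma(x)} \phi$ and on $\lr{x \assn \sigma(x)} [x \assn \sigma(x)] \neg \phi$), followed by $\mathtt{K}^\sassn$ to collapse the stacked box, force both $[x \assn \sigma(x)] \phi$ and $[x \assn \sigma(x)] \neg \phi$ to be consequences of $\lr{x \assn \sigma(x)} \Delta$, hence $[x \assn \sigma(x)] \bot$. This contradicts $\lr{x \assn \sigma(x)} \top$, which also follows from $\lr{x \assn \sigma(x)} \phi$ by monotonicity.

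For (ii): applying part (i) with the assigned variable taken to be $a$ (well-defined as $a \in FV(\Delta)$, and $\sigma(a) = a$ by coherence condition (i)) to the hypothesis $[x \assn a] \phi \in \Delta$, I obtain $\lr{a \assn a}[x \assn a] \phi \in \Delta$. The given admissibility of $\phi[a/x]$ is exactly the side-condition of $\mathtt{SUB}^\sassn$, so $[a \assn a]([x \assn a] \phi \to \phi[a/x])$ is a theorem; one checks that its free variables equal $FV(\phi) \setminus \{x\} \subseteq FV(\Delta)$, so $FV$-maximality puts it in $\Delta$. The standard normal-modal step $\lr{a \assn a} \psi \wedge [a \assn a](\psi \to \chi) \vdash \lr{a \assn a} \chi$ then yields $\lr{a \assn a} \phi[a/x] \in \Delta$. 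If $\phi[a/x] \notin \Delta$, then $\neg \phi[a/x] \in \Delta$ by $FV$-maximality (using $FV(\phi[a/x]) \subseteq (FV(\phi) \setminus \{x\}) \cup \{a\} \subseteq FV(\Delta)$), and reapplying (i) places $\lr{a \assn a} \neg \phi[a/x]$ in $\Delta$; $\mathtt{DET}^\sassn$ applied to each conjunct, together with $\mathtt{K}^\sassn$, then delivers $[a \assn a] \bot$, contradicting $\lr{a \assn a} \top$.

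The main obstacle is recognising that $\mathtt{SUB}^\sassn$, specialised to $y = x$, plays the role of a $T$-axiom on the ``diagonal'' instance $[x \assn a] \phi \to \phi$ in the presence of $\mathtt{DET}^\sassn$; this is what makes each $[x \assn a]$ behave deterministically on the syntactic side and drives both the contradiction in (i) and the elimination of $\lr{a \assn a}$ at the end of (ii). The remaining work is $FV$-bookkeeping, where coherence condition (iii) is exactly what guarantees that every agent appearing in a $\Delta$-formula already lies in $FV(\Delta)$, so no intermediate formula escapes the reach of $FV$-maximality.
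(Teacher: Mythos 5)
Your proposal is correct and follows essentially the same route as the paper: both parts are proved by contradiction using $FV$-maximality, coherence condition (ii) (at $x$, resp.\ at the agent-variable $a$ with $\sigma(a)=a$), and the corresponding instance of $\mathtt{SUB}^\sassn$ (the diagonal $y=x$ instance for (i), the $y=a$ instance for (ii)); your extra steps via $\mathtt{DET}^\sassn$, $\mathtt{K}^\sassn$ and the re-use of (i) inside (ii) just spell out modal reasoning the paper leaves implicit. One small caveat on your closing remark: coherence condition (iii) does not guarantee that every agent occurring in a $\Delta$-formula lies in $FV(\Delta)$ (e.g.\ $[x \assn a]\bot \in \Delta$ is compatible with $a \notin FV(\Delta)$), but neither part of the lemma actually needs (iii), so the proof itself is unaffected.
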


\begin{proof}
Assume that $(\Delta, \sigma)$ is coherent.

(i): Assume that $\phi \in \Delta$ and $x \in FV(\Delta)$,
and suppose (towards a contradiction) that $\lr{x \assn \sigma(x)} \phi \notin \Delta$.
Note that $FV(\lr{x \assn \sigma(x)} \phi) \subseteq FV(\phi) \subseteq \Delta$,
so $\neg \lr{x \assn \sigma(x)} \phi \in \Delta$.
Then, $\lr{x \assn \sigma(x)} (\phi \wedge \neg \lr{x \assn \sigma(x)} \phi)$ is consistent by coherence,
contradicting $\mathtt{SUB}^{\sassn}$.

For (ii):
Assume that $[x \assn a] \phi \in \Delta$, $a \in FV(\Delta)$ and $\phi[a/x]$ is admissible,
and suppose (towards a contradiction) that $\phi[a/x] \notin \Delta$.
Since $a \in FV(\Delta)$ and $\phi[a/x] \notin \Delta$, $\neg \phi[a/x] \in \Delta$.
Then, by coherence, $\lr{a \assn a} ([x \assn a] \phi \wedge \neg \phi[a/x])$ is consistent,
contradicting $\mathtt{SUB}^{\sassn}$. 
\end{proof}

Then, we show that every consistent set of $\LL^\sassn_\mathbf{A}$-formulas containing no free occurrence of elements in $\mathbf{A}$ can be extended into an $FV$-MCS $\Delta$ paired with an assignment $\sigma$ s.t.\ $(\Delta, \sigma)$ is coherent.

\begin{lemma}\label{lem.lind}
For any consistent $\Gamma \subseteq \LL^{\sassn}_\mathbf{A}$ s.t.\ $\mathbf{A} \cap FV(\Gamma) = \emptyset$,
there is an $FV$-MCS $\Delta$ and a assignment $\sigma$ s.t.\ $\Gamma \subseteq \Delta$ and $(\Delta, \sigma)$ is coherent.
\end{lemma}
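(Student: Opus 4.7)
The plan is a Henkin-style stepwise construction that simultaneously extends $\Gamma$ toward an $FV$-MCS, fixes the values of $\sigma$, and introduces the witness agents demanded by clause (iii). I would enumerate all $\LL^\sassn_\mathbf{A}$-formulas as $\phi_0, \phi_1, \ldots$ and inductively build $\Delta_0 \subseteq \Delta_1 \subseteq \cdots$ together with finite-domain assignments $\sigma_n : X_n \to \mathbf{A}$ whose domain $X_n$ satisfies $FV(\Delta_n) \subseteq X_n$, maintaining the invariants: (a) $\Delta_n$ is consistent; (b) $\sigma_n(a) = a$ for every $a \in X_n \cap \mathbf{A}$; (c) $\lr{x \assn \sigma_n(x)} \Delta_n$ is finitely consistent for every $x \in X_n$.

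For the base case I take $\Delta_0 = \Gamma$, $X_0 = FV(\Gamma) \subseteq \mathbf{X}$, and for each $x \in X_0$ I pick $\sigma_0(x) = a_x \in \mathbf{A}$ with $\lr{x \assn a_x} \Gamma$ consistent. Existence of such $a_x$ uses $\mathtt{UI}^\sassn$ crucially: if no $a$ worked, each would yield a finite $\Gamma_0^a \subseteq \Gamma$ with $\vdash [x \assn a] \neg \bigwedge \Gamma_0^a$, and since $\mathbf{A}$ is finite the union $\Gamma_0 = \bigcup_{a \in \mathbf{A}} \Gamma_0^a$ is finite, whence $\bigwedge_{a \in \mathbf{A}} [x \assn a] \neg \bigwedge \Gamma_0$ is a theorem, and $\mathtt{UI}^\sassn$ gives $\vdash \neg \bigwedge \Gamma_0$, contradicting consistency.

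At stage $n$, if $FV(\phi_n) \subseteq X_n$ I argue that one of $\phi_n, \neg \phi_n$ can be added while preserving invariant (c). If both additions failed, a common finite $\Delta_0 \subseteq \Delta_n$ would yield both $\vdash [x \assn \sigma_n(x)] \neg(\bigwedge \Delta_0 \wedge \phi_n)$ and $\vdash [x \assn \sigma_n(x)] \neg(\bigwedge \Delta_0 \wedge \neg \phi_n)$, which combine via $\mathtt{K}^\sassn$ into $\vdash [x \assn \sigma_n(x)] \neg \bigwedge \Delta_0$, contradicting the invariant for $\Delta_n$. If the newly added formula is $\lr{x \assn a} \top$ for some $a \notin X_n$, I set $X_{n+1} = X_n \cup \{a\}$ and $\sigma_{n+1}(a) = a$; subsequently, when some $p_a$ is processed (ensured by revisiting earlier formulas once $X$ grows), adding whichever of $p_a, \neg p_a$ is consistent forces $a \in FV(\Delta)$.

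The step I expect to be the main obstacle is verifying invariant (c) for the freshly introduced agent variable $a$, namely that $\lr{a \assn a} \Delta_{n+1}$ is still finitely consistent. Here I exploit two derived facts. First, since $\lr{x \assn a} \top \in \Delta_{n+1}$, applying $\mathtt{R}^\sassn$ to $\top$ yields $\vdash \lr{x \assn a} \top \leftrightarrow \lr{a \assn a} \top$, so $\lr{a \assn a} \top$ is derivable from $\Delta_{n+1}$. Second, since $a \notin FV(\Delta_n)$, a short derivation from $\mathtt{TR}^\sassn$ (and its contrapositive, which gives $\lr{a \assn a} \phi \to \phi$ whenever $a \notin FV(\phi)$) combined with $\mathtt{K}^\sassn$ yields $\vdash \lr{a \assn a} \phi \leftrightarrow \lr{a \assn a} \top \wedge \phi$ for every $\phi$ with $a \notin FV(\phi)$; hence for any finite $\Delta_0 \subseteq \Delta_n$, consistency of $\lr{a \assn a} \bigwedge \Delta_0$ reduces to that of $\lr{a \assn a} \top \wedge \bigwedge \Delta_0$, which follows from consistency of $\Delta_{n+1}$. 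Setting $\Delta = \bigcup_n \Delta_n$ and $\sigma = \bigcup_n \sigma_n$, compactness preserves all invariants, every formula with free variables in $FV(\Delta)$ is decided, and every witness demand has been met, yielding the required coherent pair $(\Delta, \sigma)$.
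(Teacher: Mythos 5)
Your overall architecture (a Lindenbaum extension interleaved with choosing $\sigma$, using $\mathtt{UI}^\sassn$ to find values and $\mathtt{TR}^\sassn$/$\mathtt{R}^\sassn$ to handle freshly introduced agent names) is close in spirit to the paper's proof, but your invariant (c) is too weak, and this breaks the key maximality step. You only maintain that $\lr{x \assn \sigma_n(x)} \Delta_n$ is finitely consistent for each \emph{single} $x \in X_n$. When you argue that one of $\phi_n, \neg\phi_n$ can be added, you tacitly assume that if both additions fail, the failures are witnessed by the \emph{same} variable; but the failure of adding $\phi_n$ only gives some $x$ and finite $\Theta \subseteq \Delta_n$ with $\vdash [x \assn \sigma_n(x)]\neg(\bigwedge\Theta \wedge \phi_n)$, while the failure of adding $\neg\phi_n$ gives a possibly \emph{different} $x'$ with $\vdash [x' \assn \sigma_n(x')]\neg(\bigwedge\Theta' \wedge \neg\phi_n)$. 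Combining these by $\mathtt{NEC}^\sassn$, $\mathtt{COM}^\sassn$ and $\mathtt{K}^\sassn$ only yields that $\lr{x \assn \sigma_n(x)}\lr{x' \assn \sigma_n(x')}\bigwedge(\Theta \cup \Theta')$ is inconsistent, which your single-variable invariant does not contradict. And single-prefix consistency genuinely does not imply joint-prefix consistency: a set can be consistent with ``$a$ exists'' and with ``$b$ exists'' separately while refuting their conjunction, so $\lr{x \assn a}\Delta$ and $\lr{y \assn b}\Delta$ may both be consistent while $\lr{x \assn a}\lr{y \assn b}\Delta$ is not. The same defect infects your base case, where you pick $a_x$ for each $x \in FV(\Gamma)$ independently: the individually consistent choices may be jointly incompatible, after which the extension cannot proceed.

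The repair is exactly what the paper does: maintain the \emph{joint} invariant that $\lr{x_{i_1} \assn \sigma(x_{i_1})} \cdots \lr{x_{i_k} \assn \sigma(x_{i_k})} \Delta_n$ is consistent for every finite string of variables in the current domain, choosing the values $\sigma(x_k)$ one at a time so that the whole accumulated prefix stays consistent (this is where $\mathtt{UI}^\sassn$ and $\mathtt{COM}^\sassn$ are used), and then run the decidability step against this joint prefix, which is how the two differently-prefixed theorems above are turned into a genuine contradiction. The paper also decides every agent's existence up front ($[x \assn a]\bot$ or $\lr{x \assn a}\top$ for each $a \in \mathbf{A}$) before extending, which lets it prepend $\lr{\vec{a} \assn \vec{a}}$ via $\mathtt{TR}^\sassn$ while those agents are not yet free in $\Delta_0$; your on-the-fly introduction of new agent names can be made to work with the $\mathtt{R}^\sassn$/$\mathtt{TR}^\sassn$ argument you sketch, and your re-enumeration bookkeeping for securing $a \in FV(\Delta)$ (clause (iii)) is fine, but only once the invariant is strengthened as above.
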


\begin{proof}
Let $\Gamma \subseteq \LL^{\sassn}_\mathbf{A}$ be arbitrary,
and assume that $\Gamma$ is consistent and $\mathbf{A} \cap FV(\Gamma) = \emptyset$.
We first extend $\Gamma$ into a consistent set $\Delta_0 \subseteq \LL^{\sassn}$ s.t.\ for all $a \in \mathbf{A}$,
either $[x \assn a] \bot \in \Delta_0$ or $\lr{x \assn a} \top \in \Delta_0$.
For convenience, let $\{a \in \mathbf{A} \mid \lr{x \assn a} \top \in \Delta_0\} = \{a_0, ..., a_m\}$,
and let $\vec{a} = a_0, ..., a_m$.

Then, enumerate all variables in $FV(\Delta_0)$ as $\{x_k \mid k \in \omega\}$ (possibly with repetitions since $FV(\Delta_0)$ might be finite).
We construct a sequence $(\sigma_k)_{k \in \omega}$ s.t.\ for all $k \in \omega$,
$\sigma_k$ is a function from $\{x_0, ..., x_{k-1}\}$ to $\mathbf{A}$ and 
$\lr{\vec{a} \assn \vec{a}} \lr{x_0 \assn \sigma_k(x_0)} \cdots \lr{x_{k-1} \assn \sigma_k(x_{k-1})} \Delta_0$ is consistent.

As the starting point, Let $\sigma_0 = \emptyset$. We show that it has the properties we need.

Assume that $\lr{\vec{a} \assn \vec{a}} \bigwedge \Theta$ is inconsistent for some finite $\Theta \subseteq \Delta_0$,
i.e.\ $\vdash [\vec{a}\assn \vec{a}] \neg \bigwedge \Theta$.
Then, since $\lr{x\assn a_0} \top, ..., \lr{x \assn a_m} \top \in \Delta_0$, $\Delta_0 \vdash \lr{\vec{a} \assn \vec{a}} \neg \bigwedge \Theta$. 
However, since $a_0, ..., a_m$ are not in $FV(\Delta_0)$, by $\mathtt{TR}^{\sassn}$, 
$\vdash \lr{\vec{a} \assn \vec{a}} \neg \bigwedge \Theta \to \neg \bigwedge \Theta$.
Hence, $\Delta_0 \vdash \neg \bigwedge \Theta$,
contradicting that $\Delta_0$ is consistent and $\Theta \subseteq \Delta$.

Next, given $\sigma_k$, we show how to construct $\sigma_{k+1}$.
If $x_k \in \text{dom}(\sigma_k)$, then just let $\sigma_{k+1} = \sigma_k$.
Otherwise, we show that there is some $b \in \mathbf{A}$ s.t.\ $\sigma_k \cup \{(x_k, b)\}$ has the property we need.

Suppose not.
For simplicity, let $\vec{x} = x_0, ..., x_{k-1}$.
Then, for all $b \in \mathbf{A}$, there is some finite $\Theta_b \subseteq \Delta_0$ s.t.\ $\lr{\vec{a} \assn \vec{a}} \lr{\vec{x} \assn \sigma_{k}(\vec{x})} \lr{x_k \assn b} \bigwedge \Theta_b$ is inconsistent.
By $\mathtt{COM}^{\sassn}$, it follows that $\lr{x_k \assn b} \lr{\vec{a} \assn \vec{a}} \lr{\vec{x} \assn \sigma_{k}(\vec{x})} \bigwedge \Theta_b$ is inconsistent.
Then, let $\Theta = \bigcup_{b \in \mathbf{A}} \Theta_b$.
Clearly $\vdash \bigvee_{b \in \mathbf{A}} \lr{x_k \assn b} \lr{\vec{a} \assn \vec{a}} \lr{\vec{x} \assn \sigma_k(\vec{x})} \bigwedge \Theta \to \bot$.
Moreover, by $\mathtt{UI}^\sassn$,  
$\vdash \lr{\vec{a} \assn \vec{a}} \lr{\vec{x} \assn \sigma_k(\vec{x})} \bigwedge \Theta \to \bigvee_{b \in \mathbf{A}} \lr{x_k \assn b} \lr{\vec{a} \assn \vec{a}} \lr{\vec{x} \assn \sigma_k(\vec{x})} \bigwedge \Theta$.
Thus, $\lr{\vec{a} \assn \vec{a}}\lr{\vec{x} \assn \sigma_k(\vec{x})} \bigwedge \Theta$ is inconsistent, causing a contradiction.

Therefore, we can find some $b \in \mathbf{A}$ with the intended property.
Then, let $\sigma_{k+1} = \sigma_k \cup \{(x_k, b)\}$.

Next, let $\sigma = \{(a, a) \mid a \in \mathbf{A}_0\} \cup \bigcup_{k \in \omega} \sigma_k$.
We then construct a sequence $(\Delta_k)_{k \in \omega}$ of $\LL_\mathbf{A}^{\sassn}$-formula sets s.t.\
for all $k \in \omega$, $FV(\Delta_k) \subseteq FV(\Delta_0) \cup \mathbf{A}_0$,
and for all $\vec{x} \in FV(\Delta_0) \cup \mathbf{A}_0$, $\lr{\vec{x} \assn \sigma(\vec{x})} \Delta_k$ is consistent.
As preparation, we first enumerate all $\phi$ in $\LL_\mathbf{A}^{\sassn}$ s.t.\ $FV(\phi) \subseteq FV(\Delta_0) \cup \mathbf{A}_0$ as $\{\phi_k \mid k \in \omega\}$.
$\Delta_0$ is already constructed, and $(\Delta_k)_{k > 0}$ is constructed recursively as follows:

Given $\Delta_k$ with the intended property, we show that at least one of $\Delta_k \cup \{\phi_k\}$ and $\Delta_k \cup \{\neg \phi_k\}$ has the intended property.
Suppose not.
Then, there are $\vec{x}, \vec{x}' \in FV(\Delta_k)$ and some finite $\Theta \subseteq \Delta_k$, $\Theta' \subseteq \Delta_k$ s.t.\
$\lr{\vec{x} \assn \sigma(\vec{x})} (\bigwedge \Theta \wedge \neg \phi_k)$ and $\lr{\vec{x}' \assn \sigma(\vec{x}')} (\bigwedge \Theta' \wedge \phi_k)$ are inconsistent,
so $\vdash [\vec{x} \assn \sigma(\vec{x})] [\vec{x}' \assn \sigma(\vec{x}')] (\bigwedge \Theta \to \phi_k)$ and $\vdash [\vec{x} \assn \sigma(\vec{x})] [\vec{x}' \assn \sigma(\vec{x}')] (\bigwedge \Theta' \to \neg \phi_k)$ by $\mathtt{NEC}^{\sassn}$ and $\mathtt{COM}^\sassn$.
Thus, $\lr{\vec{x} \assn \sigma(\vec{x})}\lr{\vec{x}' \assn \sigma(\vec{x}')} \bigwedge (\Theta \cup \Theta')$ is inconsistent,
causing a contradiction.
Thus, if $\Delta_k \cup \{\phi_k\}$ has the property we need, then let $\Delta_{k+1} = \Delta_k \cup \{\phi_k\}$;
otherwise, let $\Delta_{k+1} = \Delta_k \cup \{\phi_{k+1}\}$.

Finally, let $\Delta = \bigcup_{k \in \omega} \Delta_k$.
It is easy to check that $\Delta$ is an $FV$-MCS and $(\Delta, \sigma)$ is coherent.
\end{proof}

Then, we define the canonical \emph{quasi} model as follows:

\begin{definition}[Canonical Quasi-Model]
$\M^c = (W^c, \delta^c, \{R^c_A\}_{A \subseteq \mathbf{A}}, \rho^c)$ is defined as follows:
\begin{itemize}
\item $W^c = \{(\Delta, \sigma) \mid \Delta \text{ is an } FV\text{-MCS in }\LL_\mathbf{A}^{\sassn}, \sigma: FV(\Delta) \to \mathbf{A} \text{ and } (\Delta, \sigma) \text{ is coherent}\}$;
\item For all $(\Delta, \sigma) \in W^c$, $\delta^c(\Delta, \sigma) = \sigma[FV(\Delta)] = \mathbf{A} \cap FV(\Delta)$ (by coherence);
\item For all $A \subseteq \mathbf{A}$ and $(\Delta, \sigma), (\Theta, \tau) \in \Theta$, $(\Delta, \sigma) R^c_A (\Theta, \tau)$ iff $A \subseteq FV(\Delta)$
and $\K_A \alpha \in \Delta$ implies $\alpha \in \Theta$;
\item For all $p \in \mathbf{P}$ and $(\Delta, \sigma) \in W^c$, $\rho^c(p, (\Delta, \sigma)) = \{a \mid p_a \in \Delta\}$.
\end{itemize}
\end{definition}

Note that by $\mathtt{UI}^\sassn$,
$\bigvee_{a \in \mathbf{A}} \lr{x \assn a} \top$ is an $\mathbf{LEL}^\sassn$-theorem,
so for all $(\Delta, \sigma) \in W^c$,
$\delta^c(\Delta, \sigma) \neq \emptyset$.

Then, we can prove the \emph{Existence Lemma} for $\K_X$ in the usual way:

\begin{lemma}[Existence]\label{lem.exist}
For all $(\Delta, \sigma) \in W^c$,
if $\DK_A \alpha \in \Delta$,
then there is $(\Theta, \tau) \in W^c$ s.t.\ 
$(\Delta, \sigma) R^c_A (\Theta, \tau)$ and 
$\alpha \in \Theta$.
\end{lemma}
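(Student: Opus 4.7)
The plan is to mimic the standard existence lemma from basic modal logic, using the first-order Lindenbaum-style result (Lemma \ref{lem.lind}) to produce the coherent pair. Fix $(\Delta,\sigma)\in W^c$ with $\DK_A\alpha\in\Delta$. Because $\Delta$ is an $FV$-MCS, every formula belonging to it has free variables inside $FV(\Delta)$; in particular the membership $\DK_A\alpha\in\Delta$ forces $A = FV(\DK_A\alpha)\subseteq FV(\Delta)$, which already secures the first conjunct in the definition of $R^c_A$.

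Next I would define the target set
\[ \Gamma=\{\alpha\}\cup\{\beta\mid \K_A\beta\in\Delta\}. \]
Since $\K_A\beta$ and $\DK_A\alpha$ are well-formed only when their arguments are sentences, every element of $\Gamma$ lies in $\LL_0^\sassn$. Consequently $\mathbf{A}\cap FV(\Gamma)=\emptyset$ holds vacuously, so Lemma \ref{lem.lind} will be applicable to $\Gamma$ as soon as I verify its consistency.

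The heart of the argument is the consistency of $\Gamma$. Suppose for contradiction that some finite $\{\beta_1,\dots,\beta_n\}\subseteq\{\beta\mid\K_A\beta\in\Delta\}$ satisfies $\vdash (\beta_1\wedge\cdots\wedge\beta_n)\to\neg\alpha$. Applying $\mathtt{NEC}^\K$ and then distributing through $\mathtt{K}^\K$ (with $\mathtt{MONO}^\K$ if $A\neq\emptyset$), I would deduce $\vdash(\K_A\beta_1\wedge\cdots\wedge\K_A\beta_n)\to\K_A\neg\alpha$, so $\Delta\vdash\K_A\neg\alpha$, i.e. $\Delta\vdash\neg\DK_A\alpha$, contradicting the consistency of $\Delta$. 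This is the step I expect to be the main obstacle, since it is the only place where the $\K$-axioms must be deployed carefully; but it is essentially the classical argument and goes through smoothly because each $\beta_i$ and their conjunction are sentences, so no admissibility or substitution side-conditions intrude.

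Finally, I apply Lemma \ref{lem.lind} to the consistent $\Gamma$ to obtain an $FV$-MCS $\Theta$ and an assignment $\tau$ with $\Gamma\subseteq\Theta$ and $(\Theta,\tau)$ coherent, so $(\Theta,\tau)\in W^c$. By construction $\alpha\in\Theta$, and $\K_A\beta\in\Delta$ implies $\beta\in\Gamma\subseteq\Theta$; combined with $A\subseteq FV(\Delta)$ noted above, this is exactly $(\Delta,\sigma)\,R^c_A\,(\Theta,\tau)$, completing the proof.
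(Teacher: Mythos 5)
Your proposal is correct and follows exactly the paper's own (sketched) argument: form $\{\alpha\}\cup\{\beta\mid\K_A\beta\in\Delta\}$, verify it is consistent (via the standard $\mathtt{NEC}^\K$/$\mathtt{MONO}^\K$/$\mathtt{K}^\K$ argument) and free of free variables, extend it by Lemma \ref{lem.lind} to a coherent pair $(\Theta,\tau)$, and check the two clauses of $R^c_A$. You merely spell out the details the paper calls ``routine,'' including the observation that $\DK_A\alpha\in\Delta$ forces $A\subseteq FV(\Delta)$.
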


\begin{proof}
(Sketch) For any $(\Delta, \sigma) \in W^c$ and $\DK_A \alpha \in \Delta$,
it is routine to show that $\{\alpha\} \cup \{\beta \mid \K_A \beta \in \Delta\}$ is consistent and has no free variable.
Thus, by Lemma \ref{lem.lind},
it can be extended into some $(\Theta, \tau) \in W^c$,
and clearly $(\Delta, \sigma) R_A^c (\Theta, \tau)$.
\end{proof}

Using the axioms $\mathtt{MONO}^\K$, 
$\mathtt{T}^\K$
$\mathtt{KNI}$,
$\mathtt{EPI}$,
$\mathtt{API}$
and $\mathtt{ENI}$,
we can also check the following properties of $\M^c$,
which will be useful when we construct the tree-unraveling of $\M^c$:

\begin{lemma}\label{lem.cano}
For all $(\Delta, \sigma), (\Theta, \tau) \in W^c$, $A \subseteq B \subseteq \mathbf{A}$ and $a \in \mathbf{A}$, the following hold:

\begin{itemize}
\item[(i)] If $(\Delta, \sigma) R^c_B (\Theta, \tau)$, then $(\Delta, \sigma) R^c_A (\Theta, \tau)$;
\item[(ii)] The restriction of $R^c_A$ to $\{(\Delta, \sigma) \in W^c \mid A \subseteq FV(\Delta)\}$ is an equivalence relation;
\item[(iii)] If $(\Delta, \sigma) R^c_{\{a\}} (\Theta, \tau)$, then $a \in FV(\Theta)$;
\item[(iv)] If $(\Delta, \sigma) R^c_{\{a\}} (\Theta, \tau)$ and $p_a \in \Delta$, then $p_a \in \Theta$.
\item[(v)] If  $(\Delta, \sigma) R^c_{\delta^c(\Delta, \sigma)} (\Theta, \tau)$, then $\delta^c(\Theta, \tau) \subseteq \delta^c(\Delta, \sigma)$.
\end{itemize} 
\end{lemma}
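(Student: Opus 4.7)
The plan is to verify each of the five clauses by picking the corresponding axiom from Axiom Set II (together with $\mathtt{MONO}^\K$ for (i)) and running a single ``substitute--strip--transport--strip'' template, enabled by the fact that the extended language $\LL^\sassn_\mathbf{A}$ treats agent names as variables. Concretely, in every clause I instantiate the schema's metavariables $\vec x$ with the very agents $\vec a$ under discussion; the hypothesis on the relation gives $\{\vec a\} \subseteq FV(\Delta)$, so Lemma~\ref{lem.coh}(ii) (specialised to the identity substitution $[a/a]$, which is vacuously admissible) peels off the outer $[\vec a \assn \vec a]$; I then push the relevant consequent across $R^c_A$ into $\Theta$ and, when needed, apply the same stripping on the other side using that $\{\vec a\} \subseteq FV(\Theta)$.

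For (i), $\mathtt{MONO}^\K$ gives $\K_A \alpha \to \K_B \alpha \in \Delta$, so every $R^c_B$-successor is an $R^c_A$-successor (the side condition $A \subseteq FV(\Delta)$ is inherited from $B \subseteq FV(\Delta)$). For (ii), reflexivity on $\{(\Delta,\sigma) : A \subseteq FV(\Delta)\}$ is immediate from $\mathtt{T}^\K$. Transitivity I obtain from $\mathtt{KPI}$ of Proposition~\ref{prop.lelthm}: with $A = \{\vec a\}$, substituting $\vec x := \vec a$ and stripping in $\Delta_1$ yields $\K_A \alpha \to \K_A [\vec a \assn \vec a] \K_A \alpha \in \Delta_1$, so from $\K_A \alpha \in \Delta_1$ we transport $[\vec a \assn \vec a] \K_A \alpha$ into $\Delta_2$, strip once more to get $\K_A \alpha \in \Delta_2$, and then transport to $\Delta_3$. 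Symmetry is entirely dual via $\mathtt{KNI}$: from $\neg \beta \in \Delta$ we get $\neg \K_A \beta \in \Delta$ by $\mathtt{T}^\K$, then $\K_A[\vec a \assn \vec a]\neg \K_A \beta \in \Delta$, so $[\vec a \assn \vec a]\neg\K_A \beta \in \Theta$, and finally $\neg \K_A \beta \in \Theta$.

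For (iii) and (iv) the same template applied to $\mathtt{EPI}$ and $\mathtt{API}$ gives $\K_a \lr{a \assn a}\top \in \Delta$, respectively $p_a \to \K_a [a \assn a] p_a \in \Delta$; transporting along $R^c_{\{a\}}$ yields $\lr{a \assn a}\top \in \Theta$, hence $a \in FV(\Theta)$ by coherence clause (iii) of the definition, and $[a \assn a] p_a \in \Theta$, hence $p_a \in \Theta$ by Lemma~\ref{lem.coh}(ii), now licensed by the $a \in FV(\Theta)$ just obtained.

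For (v), the first step is to observe that $FV$-maximality forces $[x \assn b]\bot \in \Delta$ for every $b \in B := \mathbf{A} \setminus \delta^c(\Delta, \sigma)$, for otherwise $\lr{x \assn b}\top \in \Delta$ and coherence would place $b$ in $FV(\Delta)$. Then $\mathtt{ENI}$, substituted at $\vec x := \vec a$ enumerating $\delta^c(\Delta, \sigma)$ and stripped as usual, gives $\K_{\vec a} \bigwedge_{b \in B}[x \assn b]\bot \in \Delta$, which transports into $\Theta$. Finally, $[x \assn b]\bot \in \Theta$ rules out $b \in FV(\Theta)$, since otherwise Lemma~\ref{lem.coh}(i) would force $\lr{b \assn b}\top \in \Theta$, while $\mathtt{R}^\sassn$ of Proposition~\ref{prop.lelthm} derives $[b \assn b]\bot \in \Theta$, a contradiction. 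The only real obstacle throughout is bookkeeping: at each ``strip'' step one must certify that the relevant agents belong to the $FV$-set of the current formula set and that the instantiation of the axiom schema is admissible in the syntactic sense of the paper.
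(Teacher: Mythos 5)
Your proof is correct and takes essentially the same route as the paper: each clause is handled by instantiating the matching axiom of Axiom Set II at the agents themselves, stripping the outer assignment operators via Lemma \ref{lem.coh} (using the coherence conditions to certify $a \in FV(\cdot)$ and admissibility), and transporting sentences along the canonical relation. The only inessential deviations are in (ii), where you prove symmetry (via $\mathtt{KNI}$) and transitivity (via the derived theorem $\mathtt{KPI}$) separately while the paper obtains the equivalence relation from reflexivity ($\mathtt{T}^\K$) plus a Euclidean-style argument using $\mathtt{KNI}$ alone, and in (v), where your final contradiction goes through $\mathtt{R}^\sassn$ and coherence instead of applying Lemma \ref{lem.coh}(ii) directly to $[x \assn b]\bot$ — both are equivalent bookkeeping choices.
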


\begin{proof}
(i) is quite routine to prove.

(ii): For simplicity, let $A = \{\vec{a}\}$.
By Lemma \ref{lem.coh} and the axiom $\mathtt{KNI}$,
for any $(\Delta, \sigma) \in W^c$ s.t.\ $A \subseteq FV(\Delta)$ and any sentence $\alpha$, $\DK_A \lr{\vec{x} \assn \vec{a}} \K_{\vec{x}} \alpha \to \K_A \alpha \in \Delta$.
Thus, for any $(\Delta, \sigma), (\Delta', \sigma'), (\Delta'', \sigma'') \in \{(\Delta, \sigma) \in W^c \mid A \subseteq FV(\Delta)\}$ s.t.\ $(\Delta, \sigma) R^c_A (\Delta', \sigma')$
and $(\Delta, \sigma) R^c_A (\Delta'', \sigma'')$,
if $\K_A \alpha \in \Delta'$,
then $\lr{\vec{a} \assn \vec{a}} \K_A \alpha \in \Delta'$ (by Lemma \ref{lem.coh}),
so $\DK_A \lr{\vec{a} \assn \vec{a}} \K_A \alpha \in \Delta$,
and thus $\K_A \alpha \in \Delta$,
which implies that $\alpha \in \Delta''$.
Therefore, $(\Delta', \sigma') R^c_A (\Delta'', \sigma'')$.
Moreover, by $\mathtt{T}^\K$,
for any $(\Delta, \sigma) \in W^c$ s.t.\ $A \subseteq \Delta$,
clearly $(\Delta, \sigma) R^c_A (\Delta, \sigma)$.
Hence, the restriction of $R^c_A$ to $\{(\Delta, \sigma) \in W^c \mid A \subseteq FV(\Delta)\}$ is an equivalence relation.

(iii): If $a \in FV(\Delta)$,
then by Lemma \ref{lem.coh} and $\mathtt{EPI}$,
$\K_{\{a\}} \lr{x \assn a} \top \in \Delta$,
so for any $(\Theta, \tau) \in W^c$ s.t.\ $(\Delta, \sigma) R^c_{\{a\}} (\Theta, \tau)$,
we have $\lr{x \assn a} \top \in \Theta$,
and thus $a \in FV(\Theta)$ by coherence.

(iv): If $p_a \in \Delta$,
then $a \in FV(\Delta)$,
so by Lemma \ref{lem.coh} and $\mathtt{API}$,
$p_a \in \K_{\{a\}} [x \assn a] p_x \in \Delta$,
and thus $\K_{\{a\}} [x \assn a] p_x \in \Delta$.
Hence, for any $(\Theta, \tau) \in W^c$ s.t.\ $(\Delta, \sigma) R^c_{\{a\}} (\Theta, \tau)$,
$[x \assn a] p_x \in \Theta$;
moreover, by (iii), $a \in FV(\Theta)$,
so $p_a \in \Theta$ by Lemma \ref{lem.coh}.

(v): Let $A = \delta^c(\Delta, \sigma)$ and 
$B = \mathbf{A} \setminus \delta^c(\Delta, \sigma)$.
By coherence, for all $b \in B$, $[x \assn b] \bot \in \Delta$.
Thus, by Lemma \ref{lem.coh} and $\mathtt{ENI}$,
clearly $\K_A \bigwedge_{b \in B} [x \assn b] \bot \in \Delta$.
Hence,  for any $(\Theta, \tau) \in W^c$ s.t.\ $(\Delta, \sigma) R^c_{\delta^c(\Delta, \sigma)} (\Theta, \tau)$,
$\bigwedge_{b \in B} [x \assn b] \bot \in \Theta$,
so $B \cap FV(\Theta) = \emptyset$ by Lemma \ref{lem.coh}.
Hence, $\delta^c(\Theta, \tau) \subseteq \delta^c(\Delta, \sigma)$.
\end{proof}

Now, we introduce the canonical unraveling of the canonical quasi-model.

\begin{definition}[Canonical Unraveling]
The canonical unraveling of the canonical quasi model is \\
$\U^c = (W^{uc}, \delta^{uc}, \{R^{uc}_a\}_{a \in \mathbf{A}}, \rho^{uc})$, where 
\begin{itemize}
\item $W^{uc} \subseteq (W^c \times \wp(\mathbf{A}))^{<\omega} \times W^c$, and $(\Delta_0,\sigma_0) A_0 \cdots A_{n-1}(\Delta_n,\sigma_n) \in W^c_{uc}$ iff 
for all $k < n$, the following hold:
$A_k \subseteq \mathbf{A}$
and $(\Delta_k,\sigma_k) R^c_{A_k} (\Delta_{k+1}, \sigma_{k+1})$.

For convenience, for any $s = (\Delta_0,\sigma_0) A_0 \cdots A_{n-1}(\Delta_n,\sigma_n) \in W^{uc}$,
we use $\Delta_s$ and $\sigma_s$ to denote $\Delta_n$ and $\sigma_n$, respectively.

\item For all $s \in W^{uc}$, $\delta^{uc}(s) = \delta^c(\Delta_s, \sigma_s)$.
\item For all $s, t \in W^{uc}$, $s R^{uc}_a t$ iff $s$ and $t$ are of the following forms:
\begin{itemize}
\item $s = (\Delta_0, \sigma_0) A_0 \cdots A_{n-1} (\Delta_n, \sigma_n) B_0 (\Theta_1, \tau_1) B_1 \cdots B_{m-1} (\Theta_m, \tau_m)$
\item $t = (\Delta_0, \sigma_0) A_0 \cdots A_{n-1} (\Delta_n, \sigma_n) B'_0 (\Theta'_1, \tau'_1) B'_1 \cdots B'_{m'-1} (\Theta'_{m'}, \tau'_{m'})$
\end{itemize}
where $a \in \delta^c((\Delta_n, \sigma_n)) \cap \bigcap_{i < m} B_i \cap \bigcap_{j < m'} B'_j$ (we allow $n$, $m$ and $m'$ to be $0$);
\item For all $p \in \mathbf{P}$ and $s \in W^{uc}$, $\rho^{uc}(p, s) = \rho^c(p, (\Delta_s,\sigma_s))$.
\end{itemize}

Then, for any $(\Gamma, \pi) \in W^c$,
let $\U_{(\Gamma, \pi)}^c$ be the submodel of $\U^c$ generated from $(\varepsilon, (\Gamma, \pi))$, where $\varepsilon$ is the empty string.
\end{definition}

By the definition of the canonical unraveling and Lemma \ref{lem.exist},\ref{lem.cano}, we can easily check the following:

\begin{lemma}
For any $(\Gamma, \pi) \in W^c$, the following hold:
\begin{itemize}
\item $\U^c_{(\Gamma, \pi)}$ is a local epistemic model;
\item For all $s, t$ in $\U^c_{(\Gamma, \pi)}$, if $t \in \bigcap_{a \in A} R^{uc}_a(s)$, then $(\Delta_s, \sigma_s) R^c_A (\Delta_t, \sigma_t)$;
\item For all $s$ in $\U^c_{(\Gamma, \pi)}$, if $\DK_A \alpha \in \Delta_s$,
then there is $t \in \bigcap_{a \in A} R^{uc}_a (s)$ s.t.\ $\alpha \in \Delta_t$.
\end{itemize}
\end{lemma}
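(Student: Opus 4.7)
(Sketch) The plan is to handle the three claims in the order (2), (1), (3), since claim (2) does the heavy lifting and feeds directly into one case of claim (1).

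For claim (2), suppose $t \in \bigcap_{a \in A} R^{uc}_a(s)$, and let $p$ be the \emph{longest} common prefix of $s$ and $t$ in the tree $W^{uc}$. Each witness of $s R^{uc}_a t$ is some (possibly shorter) common prefix $p_a$ at which $a$ is alive and whose two downward paths to $s$ and $t$ carry only labels containing $a$; since $p_a$ is necessarily a prefix of $p$, the downward labels from $p$ still contain $a$, and by iterating Lemma \ref{lem.cano}(iii) along the chain of $R^c_{\{a\}}$-steps from $p_a$ up to $p$, $a$ remains alive at $p$. Intersecting over $a \in A$ gives $A \subseteq \delta^c(\Delta_p, \sigma_p)$, and every downward edge from $p$ carries a label $\supseteq A$. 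Lemma \ref{lem.cano}(i) then turns each such edge into an $R^c_A$-step, Lemma \ref{lem.cano}(iii) keeps $A$ alive along the way, and Lemma \ref{lem.cano}(ii) lets me compose and symmetrise the resulting $R^c_A$-chain through $p$ to conclude $(\Delta_s, \sigma_s) R^c_A (\Delta_t, \sigma_t)$.

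For claim (1), I verify the four properties in turn. Local $S5$: reflexivity at $a$-alive worlds is witnessed by empty tails; symmetry is immediate from the symmetric form of the definition; transitivity follows because if $p_1, p_2$ witness $s R^{uc}_a t$ and $t R^{uc}_a u$ respectively, both are prefixes of $t$, so the shorter of them serves as a common prefix of $s$ and $u$ inheriting the necessary label and aliveness conditions. Individually Increasing Domain follows by iterating Lemma \ref{lem.cano}(iii) down from the common prefix to $t$. For Local Predicates, Lemma \ref{lem.cano}(ii) provides an $R^c_{\{a\}}$-equivalence on $a$-alive worlds, so the fact $p_a \in \Delta_s$ transports up to the common prefix and back down to $\Delta_t$, with Lemma \ref{lem.cano}(iv) preserving $p_a$ at each step. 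Finally, Collectively Decreasing Domain is immediate from claim (2) applied to $A = \delta^{uc}(s)$ together with Lemma \ref{lem.cano}(v). For claim (3), given $\DK_A \alpha \in \Delta_s$, Lemma \ref{lem.exist} furnishes $(\Theta, \tau) \in W^c$ with $(\Delta_s, \sigma_s) R^c_A (\Theta, \tau)$ and $\alpha \in \Theta$; then $t := s \cdot A \cdot (\Theta, \tau)$ lies in $W^{uc}$, and for each $a \in A$ the decomposition with $s$ itself as the common prefix (and single descending label $A \ni a$) directly witnesses $s R^{uc}_a t$.

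The subtle step is claim (2): because $\mathtt{MONO}^\K$ only yields $\K_{\{a\}} \alpha \to \K_A \alpha$, the relation $R^c_A$ is contained in (but in general strictly smaller than) $\bigcap_{a \in A} R^c_{\{a\}}$, so a naive agent-by-agent reduction is not enough. What saves the argument is the tree structure: the \emph{same} longest common prefix works for every $a \in A$ simultaneously, which is precisely what upgrades each downward label from merely containing each individual $a$ to containing all of $A$ at once.
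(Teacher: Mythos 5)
Your route is the intended one (the paper leaves this lemma as an easy check from the definition of $\U^c$ together with Lemmas \ref{lem.exist} and \ref{lem.cano}), and the substantive steps are right. In particular, your treatment of claim (2) is exactly the point of the construction: the per-agent witnesses only guarantee that each label contains each $a \in A$ separately, but since every witnessing common prefix is a prefix of the longest world-ending common prefix $p$, all labels strictly below $p$ contain the whole of $A$; Lemma \ref{lem.cano}(i) and (iii) then make both branches $R^c_A$-chains through $A$-alive states, and Lemma \ref{lem.cano}(ii) composes and symmetrises them. Your verifications of Local $S5$ (including the ``shorter witness'' argument for transitivity), Individually Increasing Domain, Local Predicates, and Collectively Decreasing Domain via claim (2) plus Lemma \ref{lem.cano}(v), and your use of Lemma \ref{lem.exist} for claim (3), are all correct.

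The one case you pass over silently is $A = \emptyset$, which the lemma does cover and which the Truth Lemma genuinely needs, since $\K_\emptyset$ is in the language and, by the convention $\bigcap_{a \in \emptyset} R_a(w) = W$, acts as the universal modality. For claim (2) with $A = \emptyset$ there are no witnesses $p_a$ to start from, so you must add the observation that every world of $\U^c_{(\Gamma,\pi)}$ begins with $(\Gamma,\pi)$, hence any two of them still share a common prefix; the rest of your argument then runs with $R^c_\emptyset$, which by Lemma \ref{lem.cano}(ii) is an equivalence relation on all of $W^c$. For claim (3) with $A = \emptyset$ your witness $t = s \cdot \emptyset \cdot (\Theta, \tau)$ is not $R^{uc}_a$-accessible from $s$ for any $a$ (your phrase ``single descending label $A \ni a$'' presupposes $A \neq \emptyset$), so its membership in $\U^c_{(\Gamma,\pi)}$ has to come from reading the generated submodel as the full subtree of sequences extending $(\Gamma,\pi)$, not as the closure of the root under the relations $R^{uc}_a$; under the latter reading the $A = \emptyset$ instance can actually fail (take a single agent $a$ and a root whose MCS contains both $\DK_\emptyset \alpha$ and $\K_{\{a\}} \neg\alpha$), so this reading and the $\emptyset$-labelled children deserve an explicit sentence in your proof.
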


Then, we can prove the following \emph{Truth Lemma}:

\begin{lemma}[Truth]\label{lem.truth}
For any $(\Gamma, \pi) \in W^c$,
any $s$ in $\U^c_{(\Gamma, \pi)}$ and any $\phi \in \LL^{\sassn}_\mathbf{A}$ s.t.\ $FV(\phi) \subseteq FV(\Delta_s)$,
$$\U^c_{(\Gamma, \pi)}, s, \sigma_s \vDash \phi \iff \phi \in \Delta_s$$
\end{lemma}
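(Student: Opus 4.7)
The plan is induction on the complexity of $\phi$ (number of connectives, modal operators, and assignment operators), a measure invariant under the substitution $\phi \mapsto \phi[a/x]$. Before starting I would prove a small semantic substitution lemma: for any assignment $\sigma$, $\U^c_{(\Gamma, \pi)}, s, \sigma \vDash \psi[a/x]$ iff $\U^c_{(\Gamma, \pi)}, s, \sigma[x \mapsto a] \vDash \psi$, whenever $a \in \delta^{uc}(s)$ and $\psi[a/x]$ is admissible; this is a routine structural argument.

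For the atomic case $\phi = p_x$, set $a := \sigma_s(x)$; the semantic clause reduces to $p_a \in \Delta_s$. Since $x \in FV(p_x) \subseteq FV(\Delta_s)$, Lemma \ref{lem.coh}(i) yields $\lr{x \assn a} p_x \in \Delta_s$; $\mathtt{DET}^\sassn$ then gives $[x \assn a] p_x \in \Delta_s$, and Lemma \ref{lem.coh}(ii) delivers $p_a \in \Delta_s$. The converse follows by running the same chain on $\neg p_x$ and using $FV$-maximality of $\Delta_s$. The Boolean cases are routine.

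For the assignment case $\phi = [x \assn a] \psi$, I split on whether $a \in \delta^{uc}(s) = \mathbf{A} \cap FV(\Delta_s)$. If $a \notin FV(\Delta_s)$, the semantic clause is vacuous, while $FV$-maximality together with coherence (iii) forces $[x \assn a] \bot \in \Delta_s$, hence $[x \assn a] \psi \in \Delta_s$ via $\mathtt{K}^\sassn$. Otherwise, after renaming bound agents to make $\psi[a/x]$ admissible, I would establish the syntactic biconditional $[x \assn a] \psi \in \Delta_s \iff \psi[a/x] \in \Delta_s$: the forward direction is immediate from Lemma \ref{lem.coh}(ii); for the converse, if $\psi[a/x] \in \Delta_s$ but $\neg [x \assn a] \psi \in \Delta_s$, then $\lr{x \assn a} \neg \psi \in \Delta_s$, so $[x \assn a] \neg \psi \in \Delta_s$ by $\mathtt{DET}^\sassn$, whence $\neg \psi[a/x] \in \Delta_s$ by Lemma \ref{lem.coh}(ii), a contradiction. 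Combining this with the semantic substitution lemma and IH applied to $\psi[a/x]$ (whose complexity is strictly less than that of $[x \assn a] \psi$) closes the case.

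For the knowledge case $\phi = \K_X \alpha$ with $X = \{x_1, \ldots, x_n\}$ and $A = \sigma_s[X] = \{a_1, \ldots, a_n\}$, I would first use the three bulleted properties of $\U^c$ preceding the lemma (notably the Existence statement) together with IH on the sentence $\alpha$ to conclude $\U^c_{(\Gamma, \pi)}, s, \sigma_s \vDash \K_X \alpha$ iff $\K_A \alpha \in \Delta_s$. It then remains to prove $\K_A \alpha \in \Delta_s \iff \K_X \alpha \in \Delta_s$: direction $(\Leftarrow)$ iterates Lemma \ref{lem.coh}(i) and $\mathtt{DET}^\sassn$ to obtain $[x_1 \assn a_1] \cdots [x_n \assn a_n] \K_X \alpha \in \Delta_s$, then peels each prefix via Lemma \ref{lem.coh}(ii) to reach $\K_A \alpha$; direction $(\Rightarrow)$ runs the symmetric argument, invoking $\mathtt{SUB}^\sassn$ with $x_i$ substituted for $a_i$ (legitimate in $\LL^\sassn_\mathbf{A}$, where agents double as variables), $\mathtt{TR}^\sassn$, $\mathtt{K}^\sassn$, and a ``coherence stripping'' step: if $[x \assn \sigma_s(x)] \phi \in \Delta_s$ with $\phi \notin \Delta_s$, then $\neg \phi \in \Delta_s$ forces $[x \assn \sigma_s(x)] \bot \in \Delta_s$ (via Lemma \ref{lem.coh}(i) and $\mathtt{DET}^\sassn$), contradicting $\lr{x \assn \sigma_s(x)} \top \in \Delta_s$. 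The hardest part I expect will be the assignment case: reconciling the semantic update $\sigma_s \mapsto \sigma_s[x \mapsto a]$ with the syntactic substitution $\psi \mapsto \psi[a/x]$ under admissibility, together with the subtle interplay of $\mathtt{SUB}^\sassn$, $\mathtt{DET}^\sassn$, and coherence, and organising the induction on a substitution-invariant complexity measure.
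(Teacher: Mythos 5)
Your proposal is correct and follows essentially the same route as the paper's proof: induction with the $p_x$ case handled via Lemma \ref{lem.coh}(i), $\mathtt{DET}^\sassn$ and coherence, the $[x \assn a]\phi$ case reduced (after an $\mathtt{R}^\sassn$-style renaming to make $\phi[a/x]$ admissible) to membership of $\phi[a/x]$ via Lemma \ref{lem.coh}(ii), and the $\K_X\alpha$ case (including the conversion between $\K_X\alpha$ and $\K_{\sigma_s[X]}\alpha$) done through the listed properties of $\U^c$ and the Existence Lemma, just as the paper treats as routine. Your additional explicitness --- the substitution-invariant complexity measure justifying the IH on $\phi[a/x]$, the semantic substitution lemma (which, note, needs $\sigma(a)=a$, supplied here by coherence of $(\Delta_s,\sigma_s)$), and the vacuous $a \notin \delta^{uc}(s)$ branch --- merely spells out steps the paper leaves implicit.
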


\begin{proof}
As usual, the above Truth Lemma can be proved by induction on the structure of $\LL^\sassn_\mathbf{A}$-formulas.
We only consider the case for $p_x$ and $[x \assn a] \phi$ here, since the other cases are rather routine
(Only note that when proving the case for $\K_X \alpha$,
we need to make use of the fact that $\alpha$ is a sentence:
the truth value of a sentence is irrelevant to the assignment, so we can change the assignment to apply IH.)

For $p_x$: Let $\sigma_s(x) = a$.
First, assume that $p_x \in \Delta$.
Then, since $(\Delta_s, \sigma_s)$ is coherent, 
by Lemma \ref{lem.coh},
$\lr{x \assn a} p_x \in \Delta_s$.
Then, by $\mathtt{DET}^\sassn$, $[x \assn a] p_x \in \Delta_s$;
and we also have $\lr{x \assn a} \top \in \Delta_s$.
Note that the latter implies that $a \in FV(\Delta_s)$ by coherence.
Then, by Lemma \ref{lem.coh}, $p_a \in \Delta_s$,
so $a \in \rho^c(\Delta_s, \sigma_s) = \rho^{uc}(s)$.

Conversely, if $p_x \notin \Delta$, then $\neg p_x \in \Delta$.
Then, by the same reasoning as above, we have $\neg p_a \in \Delta_s$, so $p_a \notin \Delta$,
and thus $a \notin \rho^{uc}(s)$.

For $[x \assn a] \phi$:
By $\mathtt{R}^\sassn$,
we may assume without loss of generality that $\phi[a/x]$ is admissible.

First, assume that $[x \assn a] \phi \in \Delta_s$.
Also assume that $a \in \delta^{uc}(s) = \mathbf{A} \cap FV(\Delta_s)$.
Then, by Lemma \ref{lem.coh}, we have $\phi[a/x] \in \Delta_s$.
Hence, by IH, $\U^c_{(\Gamma, \pi)}, s, \sigma_s \vDash \phi[a/x]$,
and thus $\U^c_{(\Gamma, \pi)}, s, \sigma_s[x \mapsto a] \vDash \phi$.

Conversely, assume that $[x \assn a] \phi \notin \Delta_s$.
Then, $\lr{x \assn a} \neg \phi \in \Delta_s$,
so $[x \assn a] \neg \phi \in \Delta_s$ by $\mathtt{DET}^\sassn$,
and also $\lr{x \assn a} \top \in \Delta_s$.
By coherence, the latter implies that $a \in FV(\Delta_s)$,
so by Lemma \ref{lem.coh}, $\neg \phi[a/x] \in \Delta$,
and thus $\phi[a/x] \notin \Delta$.
Hence, by IH, $\U^c_{(\Gamma, \pi)}, s, \sigma_s \not \vDash \phi[a/x]$,
and thus $\U^c_{(\Gamma, \pi)}, s, \sigma_s[x \mapsto a] \not \vDash \phi$.
\end{proof}

Now, with the help of the above lemmas,
we can easily prove the Completeness Theorem.\\

\begin{proofcomp}
Let $\Gamma_0$ be an arbitrary consistent set of $\LL^\sassn$ formulas.
By Lemma \ref{lem.lind}, there is $(\Gamma, \pi) \in W^c$ s.t.\ $\Gamma_0 \subseteq \Gamma$.
Then, by Lemma \ref{lem.truth},
we have $\U^c_{(\Gamma, \pi)}, (\varepsilon, (\Gamma, \pi)), \pi \vDash \Gamma_0$, so $\Gamma_0$ is satisfiable.
Thus, $\mathbf{LEL}^\sassn$ is strongly complete w.r.t.\ the class of all local epistemic model.
Then, by Proposition \ref{prop.pr}, $\mathbf{LEL}^\sassn$ is also strongly complete w.r.t.\ the class of all \emph{proper} local epistemic model.
\end{proofcomp}

\subsection{Proof of Proposition \ref{prop.anf}}
In order to prove Proposition \ref{prop.anf},
first notice that we have the following:

\begin{lemma}\label{lem.elmthm}
The following are $\mathbf{LEL}^\sassn$-theorems:
\begin{center}
\begin{tabular}{ll}
$\mathtt{ELM}^\mathtt{TR}_a$ & $[x \assn a] \chi \leftrightarrow ([x \assn a] \bot \vee \chi)$ $\,$ ($x \notin FV(\chi)$)\\
$\mathtt{ELM}^\neg_{a}$ & $[x \assn a] \neg \phi \leftrightarrow ([x \assn a] \bot \vee \neg [x \assn a] \phi)$\\
$\mathtt{ELM}^\wedge_{a}$ & $[x \assn a] (\phi \wedge \psi) \leftrightarrow ([x \assn a] \phi \wedge [x \assn a] \psi)$\\
$\mathtt{ELM}^\sassn_{a}$ & $[x \assn a][y \assn b] \phi \leftrightarrow [y \assn b][x \assn a] \phi$  $\,$ ($y \neq x$)\\
\end{tabular}
\end{center}
\end{lemma}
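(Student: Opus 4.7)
The plan is to derive each of the four equivalences in turn, using primarily the axioms that make $[x \assn a]$ a normal modality (namely $\mathtt{K}^\sassn$ and the rule $\mathtt{NEC}^\sassn$), together with $\mathtt{TR}^\sassn$, $\mathtt{DET}^\sassn$, $\mathtt{COM}^\sassn$, and standard propositional reasoning.

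I would begin with $\mathtt{ELM}^\wedge_a$ and $\mathtt{ELM}^\sassn_a$, since they are essentially immediate. For $\mathtt{ELM}^\wedge_a$, the forward direction comes from applying $\mathtt{NEC}^\sassn$ and $\mathtt{K}^\sassn$ to the tautologies $\phi \wedge \psi \to \phi$ and $\phi \wedge \psi \to \psi$; the backward direction uses the tautology $\phi \to (\psi \to \phi \wedge \psi)$ together with two successive applications of $\mathtt{K}^\sassn$. For $\mathtt{ELM}^\sassn_a$, the equivalence is just $\mathtt{COM}^\sassn$ read in both directions: one direction is $\mathtt{COM}^\sassn$ as stated, and the other is $\mathtt{COM}^\sassn$ with the roles of $(x,a)$ and $(y,b)$ swapped (which is legal since $y \neq x$ iff $x \neq y$).

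I would then turn to $\mathtt{ELM}^\neg_a$. The key intermediate fact is that $\vdash [x \assn a] \phi \wedge [x \assn a] \neg \phi \to [x \assn a] \bot$, which follows from $\mathtt{ELM}^\wedge_a$ (just established) combined with $\mathtt{NEC}^\sassn$ and $\mathtt{K}^\sassn$ applied to the tautology $(\phi \wedge \neg \phi) \to \bot$. Propositional rearrangement then delivers the forward direction $[x \assn a] \neg \phi \to ([x \assn a] \bot \vee \neg [x \assn a] \phi)$. For the backward direction, the disjunct $[x \assn a] \bot \to [x \assn a] \neg \phi$ is immediate from $\bot \to \neg \phi$ via $\mathtt{NEC}^\sassn$ and $\mathtt{K}^\sassn$, while $\neg [x \assn a] \phi \to [x \assn a] \neg \phi$ unfolds by the dual notation to $\lr{x \assn a} \neg \phi \to [x \assn a] \neg \phi$, which is a direct instance of $\mathtt{DET}^\sassn$.

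Finally, I would handle $\mathtt{ELM}^\mathtt{TR}_a$, where the main obstacle lies. The easier direction, $([x \assn a] \bot \vee \chi) \to [x \assn a] \chi$, follows by showing that each disjunct implies $[x \assn a] \chi$: the first from $\bot \to \chi$ via $\mathtt{NEC}^\sassn$ and $\mathtt{K}^\sassn$, the second from $\mathtt{TR}^\sassn$ applied to $\chi$ (applicable because $x \notin FV(\chi)$). The harder direction, $[x \assn a] \chi \to ([x \assn a] \bot \vee \chi)$, I would prove via its contrapositive: assuming $\lr{x \assn a} \top$ (equivalently $\neg [x \assn a] \bot$), $\neg \chi$, and $[x \assn a] \chi$, I derive a contradiction. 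Since $x \notin FV(\chi)$, and hence $x \notin FV(\neg \chi)$, $\mathtt{TR}^\sassn$ yields $\neg \chi \to [x \assn a] \neg \chi$, so we obtain $[x \assn a] \neg \chi$. Combining this with $[x \assn a] \chi$ via the intermediate fact $[x \assn a] \phi \wedge [x \assn a] \neg \phi \to [x \assn a] \bot$ established in the previous paragraph gives $[x \assn a] \bot$, contradicting the assumption $\neg [x \assn a] \bot$. Propositional contraposition then yields the desired direction, completing the proof.
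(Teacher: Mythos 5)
Your proof is correct: each of the four equivalences is derived soundly from the normality of $[x \assn a]$ (via $\mathtt{NEC}^\sassn$ and $\mathtt{K}^\sassn$) together with $\mathtt{TR}^\sassn$, $\mathtt{DET}^\sassn$, and $\mathtt{COM}^\sassn$, and the key auxiliary fact $[x \assn a]\phi \wedge [x \assn a]\neg\phi \to [x \assn a]\bot$ is exactly the right pivot for both $\mathtt{ELM}^\neg_a$ and the hard direction of $\mathtt{ELM}^\mathtt{TR}_a$. The paper states this lemma without proof, treating it as routine, and your derivation is precisely the standard argument one would expect to fill that gap.
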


Then, with the help of the $\mathbf{LEL}^\sassn$-theorems listed above,
we can prove Proposition \ref{prop.anf}.\\

\begin{proofanf}
For convenience, 
for an assignment operator $[x \assn a]$ and a formula $\phi$,
we say $\phi$ is in \emph{$[x \assn a]$-normal form}, if for any subformula of $\phi$ of the form $[x \assn a] \theta$, $\theta$ is one of $p_x$, $\bot$ and $[\vec{y} \assn \vec{b}] \K_Z \alpha$, where $p \in \mathbf{P}$, $\vec{y} \in Z$ and $x \in Z \setminus \{\vec{y}\}$.
Then, given $[x \assn a]$, consider the following translation:

\begin{center}
\begin{tabular}{lcl|lcl}
\multicolumn{3}{l|}{When the formula is not $[x \assn a] \chi$:} & \multicolumn{3}{l}{When the formula is $[x \assn a] \chi$ and $x \in FV(\chi)$:}\\
$\mathtt{NF}^a_x(p_y)$ & $=$ & $p_y$ & 
$\mathtt{NF}^a_x([x \assn a] p_x)$ & $=$ & $[x \assn a] p_x$\\
$\mathtt{NF}^a_x(\neg \phi)$ & $=$ & $\neg \mathtt{NF}_x^a(\phi)$ &
$\mathtt{NF}^a_x([x \assn a] \neg \phi)$ & $=$ & $[x \assn a] \bot \vee \neg \mathtt{NF}_x^a([x \assn a] \phi)$\\
$\mathtt{NF}^a_x(\phi \wedge \psi)$ & $=$ & $\mathtt{NF}_x^a(\phi) \wedge \mathtt{NF}_x^a(\psi)$ & 
$\mathtt{NF}^a_x([x \assn a](\phi \wedge \psi))$ & $=$ & $\mathtt{NF}_x^a([x \assn a] \phi) \wedge \mathtt{NF}_x^a([x \assn a] \psi)$\\
$\mathtt{NF}^a_x([y \assn b] \phi)$ & $=$ & $[y \assn b] \mathtt{NF}^a_x(\phi)$ & 
$\mathtt{NF}^a_x([x \assn a][y \assn b] \phi)$ & $=$ & $[y \assn b] \mathtt{NF}^a_x([x \assn a]\phi)$\\
$\mathtt{NF}^a_x(\K_X \alpha)$ & $=$ & $\K_X \mathtt{NF}_x^a(\alpha)$ & 
$\mathtt{NF}^a_x([x \assn a] \K_X \alpha)$ & $=$ & $[x \assn a] \K_X \mathtt{NF}_x^a(\alpha)$\\
$\mathtt{NF}^a_x(\top)$ & $=$ & $\top$ & \multicolumn{3}{l}{When the formula is $[x \assn a] \chi$ and $x \notin FV(\chi)$:}\\
& & & $\mathtt{NF}^a_x([x \assn a] \chi)$ & $=$ & $[x \assn a] \bot \vee \mathtt{NF}^a_x(\chi)$\\
\end{tabular}
\end{center}

Then, using Lemma \ref{lem.elmthm}, it is rather routine to check by induction that for any $\phi \in \LL^\sassn$,
$\mathtt{NF}_x^a(\phi)$ is in $[x \assn a]$-normal form and
$\vdash \phi \leftrightarrow \mathtt{NF}_x^a(\phi)$;
moreover, if $\phi$ is already in $[y \assn b]$-normal form,
then $\mathtt{NF}^a_x(\phi)$ is still in $[y \assn b]$-normal form.

Then, given $\alpha \in \LL_0^\sassn$, we first list all the assignment operators occurring in $\alpha$ as $[x_0 \assn a_0], [x_1 \assn a_1],$ ..., $[x_{n-1} \assn a_{n-1}]$;
then, let $\alpha_0 = \alpha$,
and given $\alpha_k$, let $\alpha_{k+1} = \mathtt{NF}^{a_k}_{x_k}({\alpha_k})$.
In this way, the formula $\alpha_n$ we obtained is an assignment normal form of $\alpha$.
\end{proofanf}
\end{document}